\newtheorem{theorem}{Theorem}
\newtheorem{lemma}{Lemma}
\newtheorem{claim}[lemma]{Claim}
\newtheorem{assume}{Assumption}
\newtheorem{example}{Example}
\newcommand{\ra}{\rightarrow}
\DeclareMathOperator*{\argmax}{arg\,max}
\DeclareMathOperator{\sign}{sign}
\newcommand{\rjc}[1]{{\color{blue}{#1}}}
\title{Balancing the Robustness and Convergence of Tatonnement
\thanks{This work was supported in part by NSF grants CCF-1527568 and CCF-1909538.}
}
\author{Richard Cole \\ 		Courant Institute, NYU
\and
Yixin Tao
\\		Courant Institute, NYU
}
\date{}
\begin{document}
\newcommand{\hide}[1]{}
\newenvironment{pfof}[1]{\begin{proof}[\emph{\textbf{Proof of #1: }}]}{\end{proof}}

\newcommand{\largespace}{\hspace*{1.0in}}
\newcommand{\smallspace}{\hspace*{0.5in}}

\newcommand\numberthis{\addtocounter{equation}{1}\tag{\theequation}}

\newcommand{\bba}{\mathbf{a}}
\newcommand{\bbb}{\mathbf{b}}
\newcommand{\bbc}{\mathbf{c}}
\newcommand{\bbd}{\mathbf{d}}
\newcommand{\bbe}{\mathbf{e}}
\newcommand{\bbf}{\mathbf{f}}
\newcommand{\bbg}{\mathbf{g}}
\newcommand{\bbh}{\mathbf{h}}
\newcommand{\bbi}{\mathbf{i}}
\newcommand{\bbj}{\mathbf{j}}
\newcommand{\bbk}{\mathbf{k}}
\newcommand{\bbl}{\mathbf{l}}
\newcommand{\bbm}{\mathbf{m}}
\newcommand{\bbn}{\mathbf{n}}
\newcommand{\bbo}{\mathbf{o}}
\newcommand{\bbp}{\mathbf{p}}
\newcommand{\bbq}{\mathbf{q}}
\newcommand{\bbr}{\mathbf{r}}
\newcommand{\bbs}{\mathbf{s}}
\newcommand{\bbt}{\mathbf{t}}
\newcommand{\bbu}{\mathbf{u}}
\newcommand{\bbv}{\mathbf{v}}
\newcommand{\bbw}{\mathbf{w}}
\newcommand{\bbx}{\mathbf{x}}
\newcommand{\bby}{\mathbf{y}}
\newcommand{\bbz}{\mathbf{z}}

\newcommand{\bbA}{\mathbf{A}}
\newcommand{\bbB}{\mathbf{B}}
\newcommand{\bbC}{\mathbf{C}}
\newcommand{\bbD}{\mathbf{D}}
\newcommand{\bbE}{\mathbf{E}}
\newcommand{\bbF}{\mathbf{F}}
\newcommand{\bbG}{\mathbf{G}}
\newcommand{\bbH}{\mathbf{H}}
\newcommand{\bbI}{\mathbf{I}}
\newcommand{\bbJ}{\mathbf{J}}
\newcommand{\bbK}{\mathbf{K}}
\newcommand{\bbL}{\mathbf{L}}
\newcommand{\bbM}{\mathbf{M}}
\newcommand{\bbN}{\mathbf{N}}
\newcommand{\bbO}{\mathbf{O}}
\newcommand{\bbP}{\mathbf{P}}
\newcommand{\bbQ}{\mathbf{Q}}
\newcommand{\bbR}{\mathbf{R}}
\newcommand{\bbS}{\mathbf{S}}
\newcommand{\bbT}{\mathbf{T}}
\newcommand{\bbU}{\mathbf{U}}
\newcommand{\bbV}{\mathbf{V}}
\newcommand{\bbW}{\mathbf{W}}
\newcommand{\bbX}{\mathbf{X}}
\newcommand{\bbY}{\mathbf{Y}}
\newcommand{\bbZ}{\mathbf{Z}}

\newcommand{\aij}{a_{ij}}
\newcommand{\aik}{a_{ik}}
\newcommand{\bij}{b_{ij}}
\newcommand{\bik}{b_{ik}}
\newcommand{\bijt}{b_{ij}^t}
\newcommand{\bikt}{b_{ik}^t}
\newcommand{\cij}{c_{ij}}
\newcommand{\bps}{\mathbf{p}^*}
\newcommand{\pj}{p_{j}}
\newcommand{\ps}{p^*}
\newcommand{\rhoi}{{\rho_i}}
\newcommand{\xij}{x_{ij}}
\newcommand{\xj}{x_{j}}
\newcommand{\xijt}{x_{ij}^t}
\newcommand{\xikt}{x_{ik}^t}
\newcommand{\xt}{x^{t}}
\newcommand{\xto}{x^{t+1}}
\newcommand{\zj}{z_j}

\newcommand{\hsp}{\hspace*{0.1in}}
\newcommand{\innerprod}[2]{\langle #1,#2\rangle}
\newcommand{\rr}{\mathbb{R}}
\newcommand{\rrplusn}{\mathbb{R}_+^n}

\begin{titlingpage}
        \maketitle
  \begin{abstract}
A major goal in Algorithmic Game Theory is to justify equilibrium concepts from
an algorithmic and complexity perspective.
One appealing approach is to identify robust natural distributed algorithms
that converge quickly to an equilibrium.
This paper addresses a lack of robustness in existing convergence results  
for discrete forms of tatonnement,
including the fact that it need not converge when buyers
have linear utility functions.
This work achieves greater robustness by seeking approximate 
rather than exact convergence
in large market settings.

More specifically, this paper shows that for Fisher markets with buyers having
CES utility functions, including linear utility functions,
tatonnement will converge quickly to an approximate equilibrium
(i.e.\ at a linear rate), modulo a suitable large market assumption.
The quality of the approximation is a function of the
parameters of the large market assumption.
\end{abstract}
\end{titlingpage}

\newpage
\thispagestyle{plain}\setcounter{page}{1}

\section{Introduction}
\label{sec::intro}

To show the plausibility of equilibrium concepts one would like 
simple, robust procedures
that quickly reach or at least approach an equilibrium state.
But, as is well known, it is PPAD-hard to compute equilibria for
general economies~\cite{chen2013complexity, garg2017settling, chen2009settling}.
Consequently (assuming no unexpected complexity results
such as PPAD = FP) there are no polynomial algorithms
to compute an equilibrium in markets in general,
let alone simple, robust, and rapidly convergent procedures.

As a result, considerable attention has been given to the design
of polynomial time algorithms to find equilibria for specific families of
economies~\cite{garg2018strongly, duan2015combinatorial, ghiyasvand2012simple, jain2007polynomial, orlin2010improved, shmyrev2009algorithm, ye2008path}, and also to the analysis of simple dynamic processes,
most notably tatonnement~\cite{cole2008fast, cheung2019tatonnement, cheung2012tatonnement, cheung2018amortized} 
and proportional response~\cite{birnbaum2011distributed, zhang2011proportional, wu2007proportional, cheung2018dynamics}.
One class of economies that has received considerable attention
in the computer science literature
are Fisher markets~\cite{devanur2004spending},\footnote{In the CS literature the term market has been widely used to refer to economies;
we follow this practice.} which generalize the equal income property
of the CEEI setting~\cite{varian74} to arbitrary incomes.

This paper focuses on discrete versions of tatonnement.
Recall that the tatonnement update rule increases
the price of a good when its demand is too high,
and reduces it when the demand is low.
It is well known that tatonnement need not converge
when buyer utilities are linear as shown in the following simple example.

\begin{example}
\label{ex::tat-non-conv}
There are two items, both with unit supply, and one buyer with 2 units of money whose 
utility equals the sum of the amount of the two items it receives.
Suppose we use the update rule $p'_j=p_j\exp(\lambda \min\{x_j-1,1\})$,
where  $x_j$ is the
demand for good $j$, and $\lambda > 0$ is a parameter; 
this is essentially the version of the tatonnement rule
we will consider in this paper, and a version that has been analyzed previously.
Suppose the prices for the two items
are initially $p_1 = e^{\lambda/2}$ and $p_2=e^{-\lambda/2}$, respectively.
Then the demand for good 1 is 0 and the demand for good 2 is
$2e^{\lambda/2}$, so following one round of updates, the prices
become $p_1 = e^{-\lambda/2}$ and $p_2=e^{\lambda/2}$.
On subsequent updates the prices keep interchanging,
so there is no convergence.
\end{example}

In addition, and unsurprisingly, as one approaches linear settings,
the step size employed by the tatonnement algorithm
needs to be increasingly small,
which leads to a slower rate of convergence,
and indicates a lack of robustness in the tatonnement procedure.

In this paper, we show that in suitable large Fisher markets,
this lack of robustness disappears, so long as approximate
rather than exact convergence suffices.
In addition, we obtain fast, i.e.\ linear, convergence
to an approximate equilibrium.
To see why approximate convergence is a reasonable and even
the right goal, consider dynamical settings;
in these settings, the equilibrium state can be expected to
change over time, and then the natural convergence
question becomes how closely can one track the
moving equilibrium? The answer is that it is a function of the
rate of change and the market parameters, as analyzed
by Cheung  et al.~\cite{cheung2019tracing}.
Clearly, in this type of setting, similar results will arise
with an approximate convergence result.

Our large market assumption requires that for goods with high elasticity,
price changes cause a relatively small change in spending.
In the case of buyers with linear utility functions,
where the elasticity parameters are unbounded,
this occurs if the buyers are heterogeneous, meaning the collection of their
utility functions is quite varied; also, we need each
individual buyer to constitute a small portion of the market
and for the number of buyers to be large compared to the number of goods.

To explain the intuition behind our results, we recall that Cheung, Cole and Devanur~\cite{cheung2019tatonnement}
showed that for many types of economies, including those we consider here,
a suitable tatonnement update is equivalent to a form of mirror descent
on a suitable convex function (actually, mirror ascent on a concave function).
To achieve convergence with mirror descent, one needs the function
$F$ to have a bounded Lipschitz parameter $L$, namely that
\begin{align*}
||\nabla F(\bbp) - \nabla F(\bbq)|| \le L||\bbp -\bbq||,
\end{align*}
for any two price vectors $\bbp$ and $\bbq$.
The rate of convergence will depend inversely on $L$.
Our large market assumption ensures this property so long as
$||\bbp-\bbq||$ is not too small.

In addition, the boundedness of the Lipschitz parameter holds
only if the prices are bounded away from 0.
Prior analyses implicitly bounded this parameter by showing the prices
are bounded away from zero, though this bound depended
on the initial prices and the particulars of the market.
In this paper, we assume there are minimum or reserve prices
which provides an alternate way to implicitly bound this parameter.

Furthermore, 
to obtain a linear rate of convergence one needs the function $F(p)$ to be 
strongly convex w.r.t.\ the equilibrium point $\bbp^*$, namely:
\[
F(\bbp) - f(\bbp^*) \ge \innerprod{\nabla F(\bbp)}{\bbp - \bbp^*}
+ \alpha ||\bbp- \bbp^*||^2.
\]
Again, our large market assumption ensures this property so long as
$||\bbp-\bbp^*||$ is not too small.

\paragraph{Related work} 
Two natural dynamics have been studied in the context of Fisher markets, tatonnement and proportional response.

The stability of the tatonnement process has been considered to be one of the most fundamental issues in general equilibrium theory.
Hahn \cite{Hahn1982} provides a thorough survey on the topic,
and the textbook of Mas-Colell, Whinston and Green~\cite{MWG95} contains a good summary of the classic results.

The longstanding interpretation of tatonnement is that it is a method used by an auctioneer for iteratively updating prices,
followed by trading at the equilibrium prices once they are reached.
If trading is allowed as the price updating occurs, this is called a \emph{non-tatonnement} process.
In recent years, discrete versions of the (non)-tatonnement process have received increased attention.
Codenotti et al.~\cite{CMV05} considered a tatonnement-like process that required some coordination among different goods and showed polynomial time convergence for a class Fisher markets with weak gross substitutes (WGS) utilities.
Cole and Fleischer~\cite{cole2008fast} were the first to establish fast convergence for a truly distributed, asynchronous and discrete version of tatonnement, 
once again for a class of WGS Fisher markets.
The continued interest in the plausibility of tatonnement is also reflected
in some experiments by Hirota~\cite{hirota05},
which showed the predictive accuracy of tatonnement in a non-equilibrium trade setting.

Proportional Response, in contrast, is a buyer-oriented update, 
originally analyzed in an effort
to explain the behavior of peer-to-peer networks~\cite{wu2007proportional,zhang2011proportional}. 
Here, buyers update their spending in
proportion to the contribution each good makes to its current utility.
An $O(1/T)$
rate of convergence was shown in~\cite{birnbaum2011distributed} for Fisher markets with buyers having linear utilities, and
for the substitutes domain excluding linear utilities, a faster linear rate of
convergence was shown in~\cite{zhang2011proportional}.
Recently, these results were generalized in~\cite{cheung2018dynamics} to the complete domain of CES utilities.

We note that the fastest rate of convergence for linear utilities in prior work
was $O(1/T)$. In contrast, the present work provides a linear rate of convergence
to an approximate equilibrium for linear utilities (and indeed for the full set of
utilities being considered, namely all CES utilities excluding Leontief utilities).

\paragraph{Roadmap}
In Section~\ref{sec:prelim}
we review standard definitions and notation, and follow this with the
statement of our result.
In Section~\ref{sec::progress} we provide a high level outline of our analysis,
which is expanded on in the following two sections.
Finally, in Section~\ref{sec::discussion} we mention some related open problems.
Some proofs and subsidiary lemmas are deferred to the appendix.

\section{Preliminaries}
\label{sec:prelim}

We use bold symbols, e.g., $\bbp,\bbx,\bbe$, to denote vectors.

\paragraph{Fisher Market}
In a Fisher market, there are $n$ perfectly divisible goods and $m$ buyers.
Without loss of generality, the supply of each good $j$ is normalized to be one unit.
Each buyer $i$ has a utility function $u_i:\rrplusn \ra \rr$, and a budget of size $e_i$.
At any given price vector $\bbp\in \rrplusn$, each buyer purchases a maximum utility
affordable collection of goods.
More precisely, $\bbx_i\in \rrplusn$ is said to be a demand of buyer $i$ if $~\bbx_i~\in~\argmax_{\bbx':~\bbx'\cdot \bbp \le e_i}~u_i(\bbx')$.

A price vector $\bps \in \rrplusn$ is called a \emph{market equilibrium} if at $\bps$, for every buyer $i$ there is a demand $\bbx_i$ such that
$$
\ps_j > 0~~\Rightarrow~~\sum_{i=1}^m x_{ij} ~=~ 1
\hsp\hsp\text{and}\hsp\hsp
\ps_j = 0~~\Rightarrow~~\sum_{i=1}^m x_{ij} ~\le~ 1.
$$
The collection of $\bbx_i$ is said to be an \emph{equilibrium allocation} to the buyers.

If there are reserve prices $\bbr$, then the prices are restricted
to the domain $\bbp \ge \bbr$, and in the second equilibrium condition,
$p_j^*=0$ is replaced by $p_j^* = r_j$.

\paragraph{CES utilities}
In this paper, each buyer $i$'s utility function is of the form
$$
u_i(\bbx_i) ~=~ \left(\sum_{j=1}^n a_{ij} \cdot (x_{ij})^\rhoi\right)^{1/\rhoi},
$$
for some $-\infty \le \rhoi \le 1$.
$u_i(\bbx_i)$ is called a Constant Elasticity of Substitution (CES) utility function.
They are a class of utility functions often used in economic analysis.
The limit as $\rhoi \ra -\infty$ is called a Leontief utility, usually written as
$u_i(\bbx_i) = \min_j \frac{\xij}{c_{ij}}$ \footnote{Here, the utility function $u_i(\bbx) = \min_j \frac{\xij}{c_{ij}}$ can be seen as the limit of $u_i(\bbx) = \Big( \sum_j \left(\frac{\xij}{c_{ij}}\right)^{\rho_i}\Big)^{\frac{1}{\rho_i}}$ as $\rho_i$ tends to $-\infty$.}; and the limit as $\rhoi \ra 0$ is called a Cobb-Douglas utility, 
usually written as $\prod_j {\xij}^{a_{ij}}$, with $\sum_j \aij = 1$. 
The utilities with $\rhoi\ge 0$ capture goods that are substitutes, 
and those with $\rhoi \le 0$ goods that are
complements. 
It is sometimes convenient to write $c_i  = \frac{\rho_i}{\rho_i - 1}$.

\paragraph{Notation} Buyer $i$'s spending on good $j$, denoted by $\bij$, is given by 
$\bij = \xij \cdot \pj$.
$E = \sum_i e_i$ denotes the total spending available to all the buyers.
$\zj = \sum_i \xij -1$ denotes the excess demand for good $j$.
$\kappa$ bounds the worst case ratio of the equilibrium and reserve prices:
$\kappa \ge \max_j p_j^*/r_j$.
We sometimes index prices, spending, and demands by $t$ to indicate the relevant value at time $t$.
Finally, we use a superscript of $^*$ to indicate an equilibrium value.

Our large market assumption states that for the buyers with linear
or close to linear utilities, the spending on a single good does not vary
too much as prices change.
We define ``close to linear'' in terms of a bound $\sigma>0$ on the
$\rho_i$ parameters.

\begin{assume}\label{ass::eps::market}[Large Market Assumption]
There is a (small) constant $\epsilon > 0$ such that
for those buyers with parameter $\rho_i \ge \sigma$,
\begin{align*}
\sum_{i: \rho_i \geq \sigma} | b_{ij}^t - b_{ij}^{t+1} |  \leq \epsilon \sum_i b_{ij}^t + \epsilon r_j.
\end{align*}
In addition, the total available money $E \ge \max_j r_j$.
\end{assume}
\paragraph{Remark}We validate our assumption in the following two settings. In the first setting the market has only a few buyers with $\rho_i$ bigger than $\sigma$. In this case, it's easy to see that the assumption holds if we set $\epsilon = \max \left\{ \frac{\sum_{i: \rho_i \geq \sigma} e_i}{r_j}\right\}$.

Our second setting is a large linear market. 
The property we want is that for each good $j$, when there are price
changes by factors of at most $e^{\pm\lambda}$, a relatively
small weight of buyers will be added to and removed from those 
currently purchasing good $j$ (where the weight is measured in terms
of the buyers' budgets.)

More specifically, $b_{ij}^t$ differs from $b_{ij}^{t+1}$ only if one of the following occur:
\begin{itemize}
\item there exists a $k$ such that $\frac{a_{ij}}{p_j^t} \leq \frac{a_{ik}}{p_k^t}$ and $\frac{a_{ij}}{p_j^{t+1}} \geq \frac{a_{ik}}{p_k^{t+1}}$
\item there exists a $k$ such that $\frac{a_{ij}}{p_j^t} \geq \frac{a_{ik}}{p_k^t}$ and $\frac{a_{ij}}{p_j^{t+1}} \leq \frac{a_{ik}}{p_k^{t+1}}$.
\end{itemize}
Note that  our price update rule ensures that $\frac{p_j^{t+1}}{p_k^{t+1}} \in [e^{-2\lambda}, e^{2\lambda}] \frac{p_j^{t}}{p_k^{t}}$. Therefore,
$b_{ij}^t$ differs from $b_{ij}^{t+1}$ only if there exists a $k$ such that $\frac{a_{ij}}{a_{ik}} \in [e^{-2\lambda}, e^{2\lambda}]\frac{p_j^{t}}{p_k^{t}}$. Also, since one of $b_{ij}^t$ and $b_{ij}^{t+1}$ is non-zero,  
for all $s$, $\frac{ a_{ij}}{a_{is}} \geq \frac{p_j^{t}}{p_s^{t}} e^{-2\lambda}$. 
Let $q_s \triangleq \frac{p_j^{t}}{p_s^{t}}$.
We conclude that 
\begin{align}
\label{eqn::changing-buyers}
\sum_i |b_{ij}^t - b_{ij}^{t+1}| \leq \sum_{i:\left\{ \substack{\exists k: \frac{ a_{ij}}{a_{ik}} \in \left[e^{-\lambda} q_k, e^{\lambda} q_k \right]\\ \text{~and~} \forall s~\left(\frac{ a_{ij}}{a_{is}} \geq q_s e^{-\lambda}\right)}\right.} e_i,
\end{align}
and
\begin{align}
\label{eqn::buyers-of-j}
\sum_i b_{ij}^t \geq \sum_{i: 
\forall s\ne j~\left(\frac{ a_{ij}}{a_{is}} > q_s\right)} e_i.
\end{align}

If the buyers are diverse, meaning that for any pair of goods, $j$ and $k$ say,
the ratios $\frac{a_{ij}}{a_{ik}}$ vary substantially across the buyers,
then so long as there are many buyers satisfying the condition 
$\forall s\ne j~\left(\frac{ a_{ij}}{a_{is}} > q_s\right)$ 
in \eqref{eqn::buyers-of-j},
it seems reasonable that their purchasing power be much larger than
that of the buyers satisfying the condition 
$\exists k: \frac{ a_{ij}}{a_{ik}} \in \left[e^{-\lambda} q_k, e^{\lambda} q_k \right] \text{~and~} \forall s~\left(\frac{ a_{ij}}{a_{is}} \geq q_s e^{-\lambda}\right)$ in \eqref{eqn::changing-buyers}.
While if there are few buyers satisfying the first condition, then
it is reasonable to assume that only a small number of buyers will
switch their purchase to or from good $j$, and that this changed
spending will be much smaller than $r_j$.

This motivates setting $\epsilon$ to be greater than or equal to
\begin{align*}
\max_{j, \bbq: q_k \in [\frac{r_k}{E}, \frac{E}{r_k}]}
\Bigg\{ 
\sum_{i: \left\{ \substack{ \exists k:~\frac{ a_{ij}}{a_{ik}} 
 \in \left[e^{-\lambda} q_k, e^{\lambda} q_k \right] \\
 \text{~and~} \forall s~\left(\frac{ a_{ij}}{a_{is}} \geq q_s e^{-\lambda}\right) } \right. 
}e_i \left/
\sum_{i:~\forall s\left(\frac{ a_{ij}}{a_{is}} > q_s\right)} e_i + r_j \right.
\Bigg\},
\end{align*}
\hide{
\begin{align*}
\max_{j, \bbq: q_k \in [\frac{r_k}{E}, \frac{E}{r_k}]}
\left\{ 
\frac
{\sum_{i:~\left\{ \substack{ \exists k: \frac{ a_{ij}}{a_{ik}} 
 \in \left[e^{-\lambda} q_k, e^{\lambda} q_k \right] \\
 \text{~and~} \forall s~\left(\frac{ a_{ij}}{a_{is}} \geq q_s e^{-\lambda}\right) } \right. 
}e_i}
{\sum_{i:~\forall s~\left(\frac{ a_{ij}}{a_{is}} > q_s\right)} e_i + r_j}
\right\},
\end{align*}
}
causing our assumption to hold.

\hide{Note that $$\sum_{i,k: \frac{ a_{ij}}{a_{ik}} \in [e^{-\lambda} q_k, e^{\lambda} q_k] \text{~and~} \forall s~\left(\frac{ a_{ij}}{a_{is}} \geq q_s e^{-\lambda}\right)} e_i  \leq  \sum_{i: e^{-\lambda} \max_s \{q_s a_{is}\} \leq a_{ij} \leq e^{\lambda} \max_s \{ q_s a_{is} \}} e_i$$ and $\sum_{i: 
\forall s~\left(\frac{ a_{ij}}{a_{is}} > q_s\right)} e_i = \sum_{i: a_{ij} \geq  \max_s \{ q_s a_{is}\}} e_i$. Therefore, so long as the total endowments of buyers with $e^{-\lambda} \max_s \{q_s a_{is}\} \leq a_{ij} \leq e^{\lambda} \max_s \{ q_s a_{is} \}$ is much smaller than the total endowments of buyers with $a_{ij} > \max_s \{ q_s a_{is}\}$ or $r_j$, we can pick a small $\epsilon$ for which our assumption holds. }

\vspace*{0.3in}

Our analysis is carried out with respect to the following potential function, 
which is the dual of the Eisenberg-Gale convex program:
\begin{align*}
\bbF(\bbp) = \sum_j p_j + \sum_i e_i \log \max_{\bbx_i \cdot \bbp = e_i} u_i(x_i)
\end{align*}
using the tatonnement update rule:
\begin{align*}
p_j^{t+1} = p_j^t \cdot e^{\Delta_j^t},
\end{align*}
where $\Delta_j^t = \lambda\min\{z_j^t,1\}$ and $\lambda \leq 1$, unless this update
would reduce $p_j^{t+1}$ below the reserve price, in
which case ${\Delta_j^t}$ is chosen so that $p_j^{t+1} = r_j$.

Our main result shows an initial linear rate of convergence toward
the equilibrium, and also shows that so long as the current prices
are not too close to the equilibrium then there is good progress toward
the equilibrium. The latter statement can also be viewed as a result
regarding the tracking of a slowly moving equilibrium.

Before we state the main result, we define a parameter $C(\kappa)$ introduced
in~\cite{cheung2013tatonnement}. Here $\kappa$ is an upper bound on the
ratio $\max_j \frac{p_j^*}{r_j}$.
$C(\kappa) = \min\left\{\frac{h_c(\kappa)}{c}, \frac{\kappa - 1 - \log \kappa}{(\kappa - 1)^2}\right\}$, where $h_c(\kappa) = \frac{1 - \kappa^c + c(\kappa-1)}{(\kappa - 1)^2}$ for any $\kappa \geq 0$ except $\kappa = 1$, and $h_c(1) = \frac{c(1 - c)}{2}$ and $c = \max_i c_i$. Note that $c<1$.

\begin{theorem}
\label{thm::main-result}
For any $0 < \theta < 1$, if $\frac {\lambda \sigma} {1 - \sigma} \le 1$
and $\kappa \ge \max_j \frac{p_j^*}{r_j}$, then
\begin{align*}
\bbF(\bbp^{t}) - \bbF(\bbp^*) \leq (1 - \alpha)^t \left( \bbF(\bbp^{0}) - \bbF(\bbp^*)\right) + 2 \frac{\lambda \epsilon^2 \mathcal{M}}{\alpha \theta},
\end{align*}
where $\alpha =  \frac{\left(1 - \lambda - 2 \lambda \cdot \max\left\{ \frac{\sigma}{1 - \sigma}, 1 \right\} - 2 \epsilon - 2 \theta\right)}{\max_j \left\{\max \left\{2, \frac{1}{2 C(\kappa)} \right\} \frac{E}{\lambda r_j}\right\}}$ and $\mathcal{M} =  \max \left\{ \sum_j p_j^0, \left(\left(e^\lambda - 2 \lambda \right) \frac{1 + 2 \lambda - e^\lambda}{\lambda} + \lambda\right) \left(E + \sum_j r_j\right) \right\}$.
Furthermore,
if $\bbF(\bbp^{t}) - \bbF(\bbp^*) \ge \frac{4 \lambda \epsilon^2 \mathcal{M}}{\alpha\theta}$ then
\begin{align*}
\bbF(\bbp^{t+1}) - \bbF(\bbp^*) \leq \left(1 -\frac  {\alpha}{2}\right) \left( \bbF(\bbp^{t}) - \bbF(\bbp^*)\right).
\end{align*}
\end{theorem}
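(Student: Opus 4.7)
The plan is to establish a one-step recursion of the form $\bbF(\bbp^{t+1}) - \bbF(\bbp^*) \le (1-\alpha)(\bbF(\bbp^t) - \bbF(\bbp^*)) + O(\lambda\epsilon^2 \mathcal{M}/\theta)$, and then iterate. The framework is mirror descent: since $\bbF$ is the dual of the Eisenberg--Gale program, a standard computation shows $\nabla_j \bbF(\bbp) = -z_j(\bbp) - (\text{reserve-price correction})$, so the tatonnement rule $p_j^{t+1} = p_j^t e^{\lambda\min\{z_j^t,1\}}$ is essentially a KL-mirror-descent step on $\bbF$ with step size $\lambda$.

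First I would derive a \emph{per-step progress} lemma. Starting from the telescoping identity
\begin{align*}
\bbF(\bbp^{t+1}) - \bbF(\bbp^t) = \sum_j (p_j^{t+1}-p_j^t) + \sum_i e_i \log \frac{u_i^*(\bbp^{t+1})}{u_i^*(\bbp^t)},
\end{align*}
I would expand the log-term using $p_j^{t+1}/p_j^t = e^{\Delta_j^t}$ and the envelope theorem (which gives $\partial \log u_i^*/\partial \log p_j = -b_{ij}/e_i$), then do a second-order Taylor expansion in $\Delta_j^t$. The first-order piece yields the ``ideal'' decrease $-\sum_j b_{ij}^t \Delta_j^t + \sum_j p_j^t\Delta_j^t \approx -\lambda \sum_j z_j^t \min\{z_j^t,1\} p_j^t$. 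The quadratic piece needs to be bounded by the smoothness of $\bbF$, which is where the large market assumption enters: for $\rho_i \ge \sigma$ the Hessian term is controlled by $\sum_i |b_{ij}^{t+1}-b_{ij}^t| \le \epsilon \sum_i b_{ij}^t + \epsilon r_j$, and for $\rho_i < \sigma$ the per-buyer smoothness already gives a $\sigma/(1-\sigma)$ factor (since $|d\log b_{ij}/d\log p_j| \le \rho_i/(\rho_i-1) \le \sigma/(1-\sigma)$). This is exactly where the numerator $1-\lambda-2\lambda\max\{\sigma/(1-\sigma),1\}-2\epsilon-2\theta$ of $\alpha$ comes from; the $2\theta$ slack is used to absorb the additive $\epsilon r_j$ pieces into a small perturbation, at the cost of an $O(\lambda\epsilon^2 \mathcal{M}/\theta)$ additive error arising from an AM--GM split.

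Next I would invoke a \emph{strong-convexity-type} bound relating $\bbF(\bbp^t) - \bbF(\bbp^*)$ to the progress term produced above. This is where the parameter $C(\kappa)$ from Cheung et al. enters: it lower-bounds $\bbF(\bbp) - \bbF(\bbp^*)$ by a constant times $\sum_j (p_j - p_j^*)^2/p_j^*$ (or the analogous quantity for the truncated-update/reserve-price setting), and in turn the progress per step is at least $\lambda$ times this object up to the factor $\max_j E/(\lambda r_j)$ coming from $p_j \le E/r_j$ and the truncation of $\min\{z_j^t,1\}$. This accounts for the denominator of $\alpha$, with the two cases in $\max\{2,1/(2C(\kappa))\}$ corresponding to whether the dominant constraint is the ``$z_j \ge 1$'' truncation or the Bregman lower bound from $C(\kappa)$.

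Combining the smoothness and strong-convexity bounds gives exactly the advertised one-step inequality $\bbF(\bbp^{t+1}) - \bbF(\bbp^*) \le (1-\alpha)(\bbF(\bbp^t) - \bbF(\bbp^*)) + 2\lambda\epsilon^2\mathcal{M}/\theta$. Unrolling this recursion, the exponential factor gives the $(1-\alpha)^t(\bbF(\bbp^0)-\bbF(\bbp^*))$ term, and the geometric sum of the additive errors yields $2\lambda\epsilon^2\mathcal{M}/(\alpha\theta)$, establishing the first claim. The second claim is immediate from the one-step inequality: if $\bbF(\bbp^t) - \bbF(\bbp^*) \ge 4\lambda\epsilon^2\mathcal{M}/(\alpha\theta)$ then $2\lambda\epsilon^2\mathcal{M}/\theta \le (\alpha/2)(\bbF(\bbp^t)-\bbF(\bbp^*))$, which absorbs the additive term into the multiplicative one and converts $(1-\alpha)$ into $(1-\alpha/2)$. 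The main obstacle will be the second-order smoothness analysis: carefully tracking how the additive $\epsilon r_j$ piece of the large-market assumption interacts with both the $\lambda$ step-size expansion and the reserve-price truncation, so that the constants line up to give a nonnegative numerator in $\alpha$ and the correct form of $\mathcal{M}$ as an upper bound on $\sum_j p_j^t$ throughout the trajectory.
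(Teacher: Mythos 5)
Your proposal is correct and follows essentially the same route as the paper: you derive a one-step recursion by combining a progress lemma (second-order expansion of the potential, with the large-market assumption handling the $\rho_i\ge\sigma$ buyers and per-buyer smoothness the rest), the $C(\kappa)$ strong-convexity bound of Cheung et al.\ for the distance to $\bbp^*$, an AM--GM split with slack $\theta$ to absorb the additive $\epsilon$ error into a $\lambda\epsilon^2\mathcal{M}/\theta$ term, and the price-sum bound $\sum_j p_j^t\le\mathcal{M}$; unrolling gives the first claim and the threshold $4\lambda\epsilon^2\mathcal{M}/(\alpha\theta)$ gives the second. This matches the paper's proof in both structure and the sources of each factor in $\alpha$ and $\mathcal{M}$.
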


In Section~\ref{sec::dyn-markets}, we show a variant of this theorem which demonstrates that in  dynamical settings, i.e.\ setting in which the equilibrium point changes over time, but not too quickly, this changing equilibrium can be tracked using tatonnement updates.
\section{A High Level Overview of the Analysis}
\label{sec::progress}

The analysis is largely based on deriving two bounds.
The first is a progress lemma,
a lower bound on the reduction in value of $F(p)$ due to the
time $t$ update, stated in Lemma~\ref{lem::prog} below.
The second is an upper bound on the distance to the equilibrium, stated
in Lemma~\ref{lem::dis::opt} below.
We also need to relate the sum of the prices at time $t+1$ to the
corresponding sum for time $0$, as stated in Lemma~\ref{lem::sum::price}.
Our main result then follows fairly readily.

With a slight abuse of notation, we let $u_i(\bbb_i, \bbp)$ denote buyer $i$'s
utility when spending $\bbb_i$ at prices $\bbp$. 

We analyze the change to the potential function due to the time $t$ updates.
Note that since buyers best respond, $\max_{x_i \cdot p = e_i} u_i(x_i) = u_i(\bbb_i^{t}, \bbp^{t})$. So,
\begin{align*}
\bbF(\bbp^{t+1}) - \bbF(\bbp^t) = \sum_j (p_j^{t+1} - p_j^t) + \sum_i e_i \log \frac{u_i(\bbb_i^{t+1}, \bbp^{t+1})}{u_i(\bbb_i^{t}, \bbp^{t})}. \numberthis \label{eqn::1}
\end{align*}

\begin{lemma} \label{lem::prog}
For any $0 < \sigma < 1$ such that $\frac{\lambda \sigma}{1 - \sigma} \leq 1$, if $| \Delta_j^t | \leq \lambda |\min\{z_j^t, 1\}|$
and $\sign(\Delta_j^t) = \sign(\min\{z_j^t, 1\})$, then
\begin{align*}
\bbF(\bbp^t) - \bbF(\bbp^{t+1})   \geq&  \left(1 - \lambda - 2 \lambda \cdot \max\left\{ \frac{\sigma}{1 - \sigma}, 1 \right\}\right) \sum_j p_j^t z_j^t \Delta_j^t 
- \sum_{i: \rho_i \geq \sigma} \rho_i \sum_j \left(b_{ij}^{t} - b_{ij}^{t+1} \right) \Delta_j^t.
\end{align*}
\end{lemma}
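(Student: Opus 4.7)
The plan is to upper-bound each term on the right-hand side of equation~(1), giving an upper bound on $\bbF(\bbp^{t+1})-\bbF(\bbp^t)$ that, once negated, takes the form stated in the lemma. For the price-change term $\sum_j p_j^t(e^{\Delta_j^t}-1)$, I would apply the Taylor estimate $e^x\le 1+x+x^2$ valid on $|x|\le 1$ (available since $|\Delta_j^t|\le\lambda\le 1$), producing the upper bound $\sum_j p_j^t\Delta_j^t+\lambda\sum_j p_j^t|\Delta_j^t|$.

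For the utility terms I would use the CES indirect-utility identity $\log u_i(\bbb_i,\bbp)=\log e_i-(1/c_i)\log P_i(\bbp)$ with $P_i(\bbp)=\sum_k p_k^{c_i}a_{ik}^{1-c_i}$ and $c_i=\rho_i/(\rho_i-1)$, together with the spending formula $b_{ik}^t/e_i=(p_k^t)^{c_i}a_{ik}^{1-c_i}/P_i(\bbp^t)$. These collapse the per-buyer log-utility increment to
\[
\log u_i(\bbb_i^{t+1},\bbp^{t+1})-\log u_i(\bbb_i^t,\bbp^t)=-\frac{1}{c_i}\log\!\Bigl(\sum_k (b_{ik}^t/e_i)\,e^{c_i\Delta_k^t}\Bigr).
\]
I then split buyers by $\rho_i<\sigma$ versus $\rho_i\ge\sigma$. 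For $\rho_i<\sigma$, the bound $|c_i|\le\max\{\sigma/(1-\sigma),1\}$ combined with the hypothesis $\lambda\sigma/(1-\sigma)\le 1$ gives $|c_i\Delta_k^t|\le 1$, so a second-order Taylor expansion of the log-sum-exp yields the leading contribution $-\sum_k(b_{ik}^t/e_i)\Delta_k^t$ plus a quadratic error controlled by $\lambda\max\{\sigma/(1-\sigma),1\}\sum_k(b_{ik}^t/e_i)|\Delta_k^t|$.

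The $\rho_i\ge\sigma$ case is the main obstacle, because $|c_i|$ can be arbitrarily large (in fact infinite in the linear limit $\rho_i=1$) and the naive Taylor expansion blows up. Here I would use the Donsker--Varadhan variational identity $\log\sum_k q_k e^{c_i\Delta_k^t}=c_i\sum_k q_k^{t+1}\Delta_k^t-D_{KL}(q^{t+1}\Vert q^t)$, recognising that the optimising measure is precisely the new spending distribution $q_k^{t+1}=b_{ik}^{t+1}/e_i$. Since $c_i<0$ for $\rho_i\ge\sigma>0$, the nonnegative KL term enters with a sign that is favourable for an upper bound and can be dropped, yielding $e_i[\log u_i^{t+1}-\log u_i^t]\le-\sum_k b_{ik}^{t+1}\Delta_k^t$. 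Rewriting the right-hand side as $-\sum_k b_{ik}^t\Delta_k^t+\sum_k(b_{ik}^t-b_{ik}^{t+1})\Delta_k^t$, and tightening the second piece via a Bernoulli/AM-GM estimate $X^{\rho_i}\le 1+\rho_i(X-1)$ applied to the equivalent form $[u_i^{t+1}/u_i^t]^{\rho_i}=\sum_j(b_{ij}^t/e_i)(b_{ij}^{t+1}/b_{ij}^t)^{\rho_i}e^{-\rho_i\Delta_j^t}$, produces the precise $\rho_i$-weighted correction $\rho_i\sum_j(b_{ij}^t-b_{ij}^{t+1})\Delta_j^t$ appearing in the lemma statement.

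To assemble, I would add the three upper bounds. The leading contributions combine via $\sum_i b_{ij}^t=p_j^t(1+z_j^t)$ into $-\sum_j p_j^t z_j^t\Delta_j^t$. The Taylor residuals from the price term and from both buyer groups are each of the form $\lambda\cdot(\text{constant})\cdot\sum_j p_j^t|\Delta_j^t|$ or $\lambda\cdot(\text{constant})\cdot\sum_{i,k}(b_{ik}^t/e_i)|\Delta_k^t|$, which (aggregated via $\sum_i b_{ij}^t$) the sign hypothesis $\sign(\Delta_j^t)=\sign(\min\{z_j^t,1\})$ combined with $|\Delta_j^t|\le\lambda|\min\{z_j^t,1\}|$ lets me absorb as a fraction $\lambda+2\lambda\max\{\sigma/(1-\sigma),1\}$ of $\sum_j p_j^t z_j^t\Delta_j^t$. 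The secondary subtlety is bookkeeping the sign conditions consistently across both buyer classes so that every quadratic residual lies under a common $\sum_j p_j^t z_j^t\Delta_j^t$ envelope with the stated coefficient.
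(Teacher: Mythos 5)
Your bound for the price term and for buyers with $\rho_i < \sigma$ tracks the paper's reasoning (the paper applies the analogue of your $|c_i|\le\max\{\sigma/(1-\sigma),1\}$ bound via its Lemmas~\ref{lem::1} and~\ref{lem::comp}), and your absorption of the quadratic residuals into a fraction of $\sum_j p_j^t z_j^t\Delta_j^t$, though stated loosely, is essentially the paper's Claim~\ref{lem::lower::progress} — note that the residuals must be kept in the form $\sum_j(\sum_i b_{ij}^t)(\Delta_j^t)^2$ rather than $\sum_j p_j^t|\Delta_j^t|$ before absorbing, since $\sum_j p_j^t|\Delta_j^t|$ is \emph{not} in general dominated by $\sum_j p_j^t z_j^t \Delta_j^t$.

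The genuine gap is in the $\rho_i\ge\sigma$ case. Your Donsker--Varadhan step is correct and elegant: since $c_i<0$ for $\rho_i\in(0,1)$ and the optimising measure in the variational formula is exactly $b_{ik}^{t+1}/e_i$, dropping the nonnegative KL term really does give $e_i\log\frac{u_i^{t+1}}{u_i^t}\le -\sum_k b_{ik}^{t+1}\Delta_k^t = -\sum_k b_{ik}^t\Delta_k^t+\sum_k(b_{ik}^t-b_{ik}^{t+1})\Delta_k^t$. But this has coefficient $1$ on the difference term, not $\rho_i$, and there is no quadratic correction. Since the sign of $\sum_k(b_{ik}^t-b_{ik}^{t+1})\Delta_k^t$ is not controlled, the coefficient-$1$ bound and the lemma's $\rho_i$-weighted bound are genuinely incomparable, so you have proved a \emph{variant} of the lemma, not the lemma itself. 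The subsequent claim that Bernoulli ``tightens'' the difference term to $\rho_i\sum_j(b_{ij}^t-b_{ij}^{t+1})\Delta_j^t$ does not hold up: once you drop the KL term you have committed to one upper bound and cannot retroactively improve it; and if you instead start fresh from the identity $[u_i^{t+1}/u_i^t]^{\rho_i}=\sum_j(b_{ij}^t/e_i)(b_{ij}^{t+1}/b_{ij}^t)^{\rho_i}e^{-\rho_i\Delta_j^t}$ and apply Bernoulli followed by $\log x\le x-1$, the cross-term $\rho_i\sum_j\frac{b_{ij}^{t+1}-b_{ij}^t}{e_i}\,e^{-\rho_i\Delta_j^t}$ produces, after Taylor-expanding $e^{-\rho_i\Delta_j^t}$, second-order errors of the form $\rho_i\sum_j\max\{0,b_{ij}^{t+1}-b_{ij}^t\}(\Delta_j^t)^2$ involving \emph{time-$(t+1)$} spending that do not fit the lemma's right-hand side.

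The paper's route for $\sigma\le\rho_i<1$ (its Lemma~\ref{lem::2}) avoids this by decomposing at the intermediate point $u_i(\bbb_i^t,\bbp^{t+1})$: the allocation-change ratio $u_i(\bbb_i^{t+1},\bbp^{t+1})/u_i(\bbb_i^t,\bbp^{t+1})$ is bounded by Jensen's inequality (applied to a log of a convex combination with weights $b_{ij}^{t+1}/e_i$, and separately with weights $b_{ij}^t/e_i$), and the price-change ratio $u_i(\bbb_i^t,\bbp^{t+1})/u_i(\bbb_i^t,\bbp^t)$ is bounded by $\log x\le x-1$ plus Taylor. This two-stage split is exactly what produces both the $\rho_i$ coefficient and the extra quadratic correction $\rho_i\sum_j b_{ij}^t(\Delta_j^t)^2$ cleanly, with no stray $t+1$-indexed quadratic terms. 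Your single-shot Donsker--Varadhan alternative is perfectly usable downstream in Theorem~\ref{thm::main-result}, since there the difference term is bounded by $\sum_j|b_{ij}^t-b_{ij}^{t+1}||\Delta_j^t|$ after applying $\rho_i\le 1$ anyway; but as written your argument does not establish the lemma in the form stated, and the Bernoulli step should be removed rather than presented as a patch.
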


To obtain an upper bound on $\bbF(\bbp) - \bbF(\bbp^*)$ 
we follow the approach taken in~\cite{cheung2013tatonnement}.
They pointed out that if $\frac{p_j^*}{p_j} \leq \kappa$ for all $j$, then
\begin{align}
\label{eqn::strg-conv-bdd}
\bbF(\bbp^*) - \bbF(\bbp) - \innerprod{\nabla \bbF(\bbp)} {\bbp^* - \bbp}
\geq \sum_j C(\kappa) x_j \frac{(p_j^* - p_j)^2}{p_j},
\end{align}
where $C(\kappa)$ is a suitable parameter we specify later.
We note that this is a strong convexity bound of the type we need
for a linear convergence rate.

We will show:
\begin{lemma}\label{lem::dis::opt}
If $\kappa \ge \max_j \frac {p_j^*}{r_j}$, then
\begin{align*}
\bbF(\bbp^t) - \bbF(\bbp^*) \leq \sum_j \max \left\{2, \frac{1}{2 C(\kappa)} \right\} \frac{E}{\lambda r_j} p_j^t z_j^t \Delta_j^t.
\end{align*}
\end{lemma}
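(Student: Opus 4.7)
The plan is to start from~\eqref{eqn::strg-conv-bdd}. Since $p_j^t \ge r_j$ and $\kappa \ge \max_j p_j^*/r_j$, the hypothesis $p_j^*/p_j^t \le \kappa$ of~\eqref{eqn::strg-conv-bdd} holds at every good $j$. Using the standard Eisenberg--Gale gradient identity $\nabla_j \bbF(\bbp^t) = 1 - x_j^t = -z_j^t$, the inequality~\eqref{eqn::strg-conv-bdd} rearranges to the per-coordinate decomposition
\begin{align*}
\bbF(\bbp^t) - \bbF(\bbp^*) \;\le\; \sum_j T_j, \qquad T_j \;:=\; z_j^t(p_j^* - p_j^t) - C(\kappa)\, x_j^t \,\frac{(p_j^* - p_j^t)^2}{p_j^t}.
\end{align*}
It then suffices to prove, term by term, that $T_j \le \max\{2, \tfrac{1}{2C(\kappa)}\}\,\tfrac{E}{\lambda r_j}\, p_j^t z_j^t \Delta_j^t$. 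The two inequalities I repeatedly use are (i) Young's inequality applied to the cross term against the strong-convexity term, which yields $T_j \le (z_j^t)^2 p_j^t/(4C(\kappa)\, x_j^t)$, and (ii) the envelope bound $T_j \le z_j^t(p_j^* - p_j^t)$ obtained by dropping the strong-convexity term.

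For $z_j^t \ge 0$, $x_j^t \ge 1$ and bound (i) gives $T_j \le (z_j^t)^2 p_j^t/(4C(\kappa))$; since $\Delta_j^t/\lambda = \min\{z_j^t,1\}$ and $z_j^t/(z_j^t+1) < 1$, one checks using only $E \ge r_j$ that $T_j$ is dominated by the $\tfrac{1}{2C(\kappa)}$ branch of the maximum. For $z_j^t < 0$ with $p_j^* \ge p_j^t$, bound (ii) immediately gives $T_j \le 0$ and there is nothing more to do.

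The delicate case is $z_j^t < 0$ with $p_j^t > p_j^* \ge r_j$, where $x_j^t$ can be close to zero and bound (i) deteriorates. I split on whether the reserve-price cap on $\Delta_j^t$ is active. If it is, then $|\Delta_j^t| = \log(p_j^t/r_j)$, and the elementary inequality $p_j^t\log(p_j^t/r_j) \ge p_j^t - r_j$ yields $T_j \le |z_j^t|(p_j^t - p_j^*) \le |z_j^t|(p_j^t - r_j) \le p_j^t z_j^t \Delta_j^t$, so the ``$2$'' branch suffices since $\lambda r_j \le 2E$ (using $\lambda \le 1$ and $E \ge r_j$). If the cap is inactive then $\Delta_j^t = \lambda z_j^t$; when $|z_j^t| \le 1/2$ we have $x_j^t \ge 1/2$ and bound (i) still lands in the $\tfrac{1}{2C(\kappa)}$ branch, whereas for $|z_j^t| > 1/2$ I use bound (ii) in the form $T_j \le |z_j^t|\,p_j^t$ together with the inequality $r_j \le 2E|z_j^t|$ (which follows from $E \ge r_j$ and $|z_j^t|>1/2$) to fall under the ``$2$'' branch.

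The main obstacle is precisely this last situation --- negative excess demand at an overpriced good with $x_j^t$ close to zero. The strong-convexity term alone is too weak to absorb the cross term; the progress must be extracted either from the logarithmic inequality that governs the reserve-price cap, or from the observation that $|z_j^t|$ is necessarily bounded away from zero whenever $x_j^t$ is. Once these sub-cases are dispatched, summing the per-coordinate bounds over $j$ completes the proof.
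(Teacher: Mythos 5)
Your proof is correct and lands on the same overall strategy as the paper: start from the strong-convexity bound \eqref{eqn::strg-conv-bdd}, write $\bbF(\bbp^t)-\bbF(\bbp^*) \le \sum_j T_j$, and then handle the reserve-cap-active and reserve-cap-inactive coordinates separately. Within that framework you make two genuine simplifications. First, where the paper passes to $\max_{\bbp'\ge\bbr}$ (inequality \eqref{ineq::dist::opt::1}) and then cites \eqref{ineq::dist::opt::2} from prior work as a black box, you bound $T_j$ directly by re-deriving the quadratic maximum $(z_j^t)^2 p_j^t/(4C(\kappa)x_j^t)$ and splitting on $|z_j^t|\gtrless 1/2$: the small-$|z_j^t|$ branch controls $x_j^t\ge 1/2$ and recovers the $\tfrac{1}{2C(\kappa)}$ factor, while the large-$|z_j^t|$ branch switches to the envelope bound and lands in the ``$2$'' branch. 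This is essentially a self-contained reproof of the constant in \eqref{ineq::dist::opt::2}, which the paper does not reprove. Second, for the cap-active case your envelope argument $T_j \le z_j^t(p_j^*-p_j^t) \le |z_j^t|(p_j^t-r_j) \le |z_j^t|\, p_j^t\log(p_j^t/r_j) = p_j^t z_j^t\Delta_j^t$ (using $p_j^*\ge r_j$ and $1-1/x\le\log x$) replaces the paper's Lemma~\ref{lem::linear::dis}, whose proof requires a further case split on whether the unconstrained quadratic optimum lies above or below $r_j$. Your version is more direct because you never need the bound uniformly over all $p'_j\ge r_j$ --- knowing $p_j^*\ge r_j$ is all that is used. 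One small remark: in the $z_j^t\ge 0$ sub-case with $z_j^t>1$ you also need the bound $z_j^t\le E/r_j$ (from $\sum_i b_{ij}^t\le E$ and $p_j^t\ge r_j$) rather than just $E\ge r_j$; the paper states this explicitly, and you should too, since ``using only $E\ge r_j$'' understates what is needed there.
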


Finally, the bound on the sum of the prices is stated in the next lemma.
\begin{lemma}\label{lem::sum::price}
Using the definition of $\mathcal{M}$ from Theorem~\ref{thm::main-result} gives
\begin{align*}
\sum_j p_j^{t+1} & \le \max \left\{ \sum_j p_j^0, \left(\left(e^\lambda - 2 \lambda \right) \frac{1 + 2 \lambda - e^\lambda}{\lambda} + \lambda\right) \left(E + \sum_j r_j\right) \right\} = \mathcal{M}.
\end{align*}
\end{lemma}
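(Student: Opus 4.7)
The plan is to prove by induction on $t$ that $P^t := \sum_j p_j^t \le \mathcal{M}$. The base case $t = 0$ is immediate from $\mathcal{M} \ge P^0$. For the inductive step, I would bound $p_j^{t+1}$ separately for each good $j$, splitting on the sign of the excess demand $z_j^t$, and then sum using the identity $p_j^t(1 + z_j^t) = \sum_i b_{ij}^t$ and the budget constraint $\sum_j \sum_i b_{ij}^t = E$.

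For goods with $z_j^t \ge 0$, convexity of $e^x$ on $[0, \lambda]$ gives $p_j^{t+1} \le p_j^t + (e^\lambda - 1)\, p_j^t \min(z_j^t, 1)$. The crucial sharpening is that $p_j^t$ and $p_j^t z_j^t$ are non-negative and sum to $\sum_i b_{ij}^t$, so $p_j^t \min(z_j^t, 1) \le \tfrac{1}{2} \sum_i b_{ij}^t$. For goods with $z_j^t < 0$, the secant bound for $e^x$ on $[-\lambda, 0]$ gives $p_j^t e^{\lambda z_j^t} \le e^{-\lambda} p_j^t + (1 - e^{-\lambda}) \sum_i b_{ij}^t$; since $p_j^t \ge r_j$, this upper bound already exceeds $e^{-\lambda} r_j$, so the reserve-price correction when the reserve binds contributes at most $(1 - e^{-\lambda}) r_j$. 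Uniformly,
\[
p_j^{t+1} \le e^{-\lambda} p_j^t + (1 - e^{-\lambda})\bigl({\textstyle\sum_i} b_{ij}^t + r_j\bigr).
\]

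Summing both cases over $j$ and applying the induction hypothesis produces a one-step recurrence of the form $P^{t+1} \le \alpha P^t + \beta(E + \sum_j r_j)$. Iterating this recurrence (equivalently, noting that $P^{t+1}$ lies on the line segment between $P^t$ and the fixed point $\beta V / (1 - \alpha)$ where $V = E + \sum_j r_j$) yields the desired bound $P^{t+1} \le \max(P^0, C V)$. The main obstacle is the algebra: naively combining the individual bounds gives a coefficient on $V$ that blows up like $1/\lambda$, so extracting the stated constant $(e^\lambda - 2\lambda)(1 + 2\lambda - e^\lambda)/\lambda + \lambda$ requires (i) using the sharper $\tfrac{1}{2} \sum_i b_{ij}^t$ bound on $p_j^t \min(z_j^t, 1)$ rather than the crude $\sum_i b_{ij}^t - p_j^t$, (ii) carefully combining the saturated regime $z_j^t > 1$ (where $\Delta_j^t$ caps at $\lambda$) with the unsaturated regime, and (iii) tightening the reserve correction from $r_j$ to $(1 - e^{-\lambda}) r_j$ by using $p_j^t \ge r_j$.
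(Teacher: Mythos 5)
Your approach is genuinely different from the paper's, but it has a gap in the nonnegative-excess-demand case that breaks the contraction you rely on. The paper does not split by the sign of $z_j^t$; instead it uses the algebraic identity $p_j^t e^{\Delta_j^t} = p_j^t\bigl(e^{\Delta_j^t} - 1 - \Delta_j^t\bigr) + p_j^t\bigl(1 + \Delta_j^t\bigr)$, bounds the quadratic remainder uniformly by $(e^\lambda - 1 - \lambda)p_j^t$, and then bounds the linear part $p_j^t(1+\Delta_j^t)$ by $(1-\lambda)p_j^t + \lambda r_j$ when the reserve binds and by $(1-\lambda)p_j^t + \lambda\sum_i b_{ij}^t$ otherwise (using $\Delta_j^t \le \lambda z_j^t$). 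Summing gives the single-step recurrence $P^{t+1} \le (e^\lambda - 2\lambda)P^t + \lambda(E + \sum_j r_j)$, and the contraction comes from the coefficient $e^\lambda - 2\lambda < 1$.

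In your version, the $z_j^t < 0$ branch is fine: the secant bound on $[-\lambda,0]$ gives a coefficient $e^{-\lambda}$ on $p_j^t$, and the reserve correction $(1-e^{-\lambda})r_j$ is justified by $p_j^t \ge r_j$. The problem is the $z_j^t \ge 0$ branch: the tangent bound $e^{\lambda u} \le 1 + (e^\lambda - 1)u$ (with $u = \min\{z_j^t,1\} \ge 0$) yields $p_j^{t+1} \le p_j^t + \tfrac{e^\lambda - 1}{2}\sum_i b_{ij}^t$, where the coefficient on $p_j^t$ is exactly $1$, not a contraction. Summing the two branches therefore produces $P^{t+1} \le \sum_{j: z_j^t \ge 0} p_j^t + e^{-\lambda}\sum_{j:z_j^t<0} p_j^t + (\text{const})\cdot(E+\sum_j r_j)$, which is not of the advertised form $\alpha P^t + \beta(E + \sum_j r_j)$ with $\alpha < 1$; the fixed-point argument you invoke then has nothing to bite on. Two standard fixes: either replace the tangent bound by the secant of $e^x$ over the full interval $[-\lambda,\lambda]$, which gives the coefficient $e^{-\lambda}$ on $p_j^t$ in \emph{both} branches (and the same $\min(A,B) \le \tfrac{A+B}{2}$ trick still controls the linear-in-$\Delta$ piece); or note that for $z_j^t \ge 0$ one has $p_j^t \le \sum_i b_{ij}^t$, so $\sum_{j:z_j^t\ge0} p_j^t \le E$ and that sub-sum can be absorbed into the $\beta(E + \sum_j r_j)$ term rather than into $\alpha P^t$. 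As for matching the paper's specific constant $\bigl(e^\lambda - 2\lambda\bigr)\tfrac{1+2\lambda-e^\lambda}{\lambda} + \lambda$: you should not expect your route to reproduce it exactly (a smaller constant is still a valid bound), and in fact the paper's own derivation appears to invert the threshold: from $P^{t+1} \le (e^\lambda - 2\lambda)P^t + \lambda V$ the monotone-decrease condition is $P^t \ge \tfrac{\lambda}{1+2\lambda-e^\lambda}V$, not $P^t \ge \tfrac{1+2\lambda-e^\lambda}{\lambda}V$, so the fixed-point constant should be $\tfrac{\lambda}{1+2\lambda-e^\lambda}$.
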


We are now ready to prove our main result.
\begin{proof} [Proof of Theorem~\ref{thm::main-result}]
 We will be applying Lemma~\ref{lem::prog}, and we begin by bounding the second term on the RHS of the expression there.
\begin{align*}
\sum_{i : \rho_i \geq \sigma} \rho_i \sum_j \left(b_{ij}^t - b_{ij}^{t+1}\right) \Delta_j^t &\leq \sum_{i : \rho_i \geq \sigma} \rho_i \sum_j \left| b_{ij}^t - b_{ij}^{t+1} \right| \left|\Delta_j^t \right| \\
&\leq \sum_{i : \rho_i \geq \sigma} \sum_j \left| b_{ij}^t - b_{ij}^{t+1} \right| \left|\Delta_j^t \right|.
\end{align*}
By Assumption~\ref{ass::eps::market} for the first inequality, and because $p_j^t \geq r_j$ for the second inequality,

\begin{align*}
\sum_{i : \rho_i \geq \sigma} \rho_i \sum_j \left(b_{ij}^t - b_{ij}^{t+1}\right) \Delta_j^t &\leq \sum_j (\epsilon \sum_i b_{ij}^t + \epsilon r_j) \left| \Delta_j^t \right| = \sum_j (\epsilon p_j^t (1 + z_j^t) +\epsilon r_j) \left| \Delta_j^t \right| \\
&\leq \sum_j 2 \epsilon p_j^t \left| \Delta_j^t \right| + \epsilon p_j^t z_j^t \left| \Delta_j^t \right| \leq\sum_j 2 \epsilon p_j^t \left| \Delta_j^t \right| + \epsilon p_j^t z_j^t \Delta_j^t. \numberthis \label{ineq::error::ad2}
\end{align*}

We use the following result: for any $\theta > 0$,
\begin{align*}
\sum_j 2 \epsilon p_j^t \left| \Delta_j^t \right| \leq \sum_j 2 \theta p_j^t z_j^t \Delta_j^t + 2 \frac{\lambda \epsilon^2}{\theta} \sum_j p_j^t. \numberthis \label{ineq::error::ad}
\end{align*}

This holds because if $\epsilon p_j^t \left| \Delta_j^t \right| \geq \theta p_j^t z_j^t \Delta_j^t$, then
$\epsilon \geq \theta |z_j^t|$.
Therefore
\begin{align*}
2 \epsilon p_j^t \left| \Delta_j^t \right| \leq 2 \frac{\lambda\epsilon^2}{\theta} p_j^t.
\end{align*}

Substituting \eqref{ineq::error::ad2} and \eqref{ineq::error::ad} in Lemma~\ref{lem::prog} yields
\begin{align*}
\bbF(\bbp^t) - \bbF(\bbp^{t+1}) \geq &\left(1 - \lambda - 2 \lambda \cdot \max\left\{ \frac{\sigma}{1 - \sigma}, 1 \right\} - 2 \epsilon - 2\theta\right) \sum_j p_j^t z_j^t \Delta_j^t 
-  2 \frac{\lambda \epsilon^2}{\theta} \sum_j p_j^t
\end{align*}

Applying Lemma~\ref{lem::sum::price} gives
\begin{align*}
\bbF(\bbp^t) - \bbF(\bbp^{t+1}) \geq &\left(1 - \lambda - 2 \lambda \cdot \max\left\{ \frac{\sigma}{1 - \sigma}, 1 \right\} - 2 \epsilon - 2\theta\right) \sum_j p_j^t z_j^t \Delta_j^t 
-  2 \frac{\lambda \epsilon^2}{\theta} \mathcal{M}.
\numberthis  \label{ineq::used::later}\end{align*}
Applying Lemma~\ref{lem::dis::opt} and recalling that
$\alpha =  \frac{\left(1 - \lambda - 2 \lambda \cdot \max\left\{ \frac{\sigma}{1 - \sigma}, 1 \right\} - 2 \epsilon - 2\theta\right)}{\max_j \left\{\max \left\{2, \frac{1}{2 C(\kappa_j)} \right\} \frac{E}{\lambda r_j}\right\}}$ yields:
\begin{align*}
\bbF(\bbp^t) - \bbF(\bbp^{t+1}) \geq \alpha[F(\bbp^t) - F(\bbp^*)] 
-  2 \frac{\lambda \epsilon^2}{\theta} \mathcal{M}.
\end{align*}
Our first claim follows readily:
\begin{align}
\label{eqn::one-step-progress}
\bbF(\bbp^{t+1}) - \bbF(\bbp^*) &\leq (1 - \alpha) \left(\bbF(\bbp^t) - \bbF(\bbp^*) \right) + 2 \frac{\lambda \epsilon^2}{\theta} \mathcal{M} \\
\nonumber
&\leq (1 - \alpha)^{\rjc{t}} \left( \bbF(\bbp^{0}) - \bbF(\bbp^*)\right) + 2 \frac{\lambda \epsilon^2 \mathcal{M}}{\theta}\left(1 + (1 - \alpha) + (1 - \alpha)^2 + \cdots\right) \\
\nonumber
&\leq (1 - \alpha)^{\rjc{t}} \left( \bbF(\bbp^{0}) - \bbF(\bbp^*)\right) + 2 \frac{\lambda \epsilon^2 \mathcal{M}}{\alpha \theta}.
\end{align}

To prove the second claim, recall that we are assuming
$\bbF(\bbp^{t}) - \bbF(\bbp^*)
\leq 4 \frac{\lambda \epsilon^2}{\alpha\theta} \mathcal{M}$.
Then, by \eqref{eqn::one-step-progress},
\begin{align*}
\bbF(\bbp^{t+1}) - \bbF(\bbp^*) &\leq (1 - \alpha) \left(\bbF(\bbp^t) - \bbF(\bbp^*) \right) +\frac {\alpha}{2} \bbF(\bbp^{t}) - \bbF(\bbp^*)\\
&\le (1 - \frac{\alpha}{2}) \left(\bbF(\bbp^t) - \bbF(\bbp^*) \right).
\end{align*}
\end{proof}

\section{The Proof of Lemma~\ref{lem::prog}, the Progress Lemma}
\label{sec::proof-progress-lemma}

The starting point for our analysis is \eqref{eqn::1}.
The first step is to bound 
$\log \frac{u_i(\bbb^{t+1}_i, \bbp^{t+1}_i)}{u_i(\bbb^t_i, \bbp^t_i)}$.
The next four lemmas provide a variety of bounds
depending on the value of $\rho_i$ and other parameters.

\begin{lemma} \label{lem::linear}
If buyer $i$ has a linear utility function, then
\begin{align*}
e_i \log \frac{u_i(\bbb^{t+1}_i, \bbp^{t+1}_i)}{u_i(\bbb^t_i, \bbp^t_i)} \leq - \sum_j b_{ij}^t \Delta_j^t + \sum_j (b_{ij}^t - b_{ij}^{t+1})\Delta_{j}^t.
\end{align*}
\end{lemma}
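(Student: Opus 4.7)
The plan is to exploit the structure of linear best responses. With a linear utility $u_i(\bbx_i) = \sum_j a_{ij} x_{ij}$, the optimal utility-per-dollar equals $\max_j a_{ij}/p_j$, and $b_{ij}>0$ only for goods $j$ that attain this maximum. Hence $u_i(\bbb_i^t, \bbp^t) = e_i \cdot \max_j a_{ij}/p_j^t$, and likewise at time $t+1$.

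First, I would show that for every $k$ with $b_{ik}^{t+1} > 0$,
\begin{align*}
\log \frac{u_i(\bbb_i^{t+1}, \bbp^{t+1})}{u_i(\bbb_i^t, \bbp^t)} \;\le\; -\Delta_k^t.
\end{align*}
This holds because $u_i(\bbb_i^{t+1}, \bbp^{t+1})/e_i = a_{ik}/p_k^{t+1}$ (since $k$ is a maximizer at time $t+1$), while $u_i(\bbb_i^t, \bbp^t)/e_i \ge a_{ik}/p_k^t$ (the time-$t$ maximum dominates any single ratio). Dividing yields $u_i^{t+1}/u_i^t \le p_k^t / p_k^{t+1} = e^{-\Delta_k^t}$, and taking logs gives the bound.

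Next, since the left-hand side is a single scalar while the bound holds for every $k$ in the support of $\bbb_i^{t+1}$, I would take a convex combination with weights $b_{ik}^{t+1}/e_i$; these weights sum to one because $\sum_k b_{ik}^{t+1} = e_i$. This yields
\begin{align*}
\log \frac{u_i(\bbb_i^{t+1}, \bbp^{t+1})}{u_i(\bbb_i^t, \bbp^t)} \;\le\; -\sum_k \frac{b_{ik}^{t+1}}{e_i}\, \Delta_k^t.
\end{align*}
Multiplying by $e_i$ and using the identity $-\sum_k b_{ik}^{t+1}\Delta_k^t = -\sum_k b_{ik}^t \Delta_k^t + \sum_k (b_{ik}^t - b_{ik}^{t+1})\Delta_k^t$ recovers the stated bound.

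The only subtle step is the pooling into a convex combination: it is essential that the best-response structure at time $t+1$ forces $\bbb_i^{t+1}$ to be supported on the maximizers of $a_{ij}/p_j^{t+1}$, so that the inequality $\log(u_i^{t+1}/u_i^t) \le -\Delta_k^t$ is valid for every $k$ appearing with positive weight. The remaining manipulations are routine algebra, so I expect no further obstacle.
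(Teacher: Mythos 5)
Your proof is correct. It rests on the same key observation as the paper's: that the time-$t$ optimal utility-per-dollar dominates any individual ratio $a_{ik}/p_k^t$, in particular the one for whatever good is optimal at time $t+1$. The organization differs slightly. The paper first reduces WLOG to the case where each buyer spends its entire budget on a single good (by a device of splitting a buyer into several one-good buyers), and then tracks the two best goods $j(i,\bbp^t)$ and $j(i,\bbp^{t+1})$ explicitly. You instead keep the general multi-good best response and observe that the inequality $\log(u_i^{t+1}/u_i^t)\le -\Delta_k^t$ holds uniformly over the support of $\bbb_i^{t+1}$, so you can average with weights $b_{ik}^{t+1}/e_i$. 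This avoids the buyer-splitting step and handles ties directly, which is a modest simplification; the final algebraic rearrangement $-\sum_k b_{ik}^{t+1}\Delta_k^t = -\sum_k b_{ik}^t\Delta_k^t + \sum_k(b_{ik}^t-b_{ik}^{t+1})\Delta_k^t$ is the same in both.
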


\begin{lemma}For any $0 < \rho_i < 1$,  if $| \Delta_j^t | \leq 1$ for all $j$ and $t$, then
\begin{align*}
e_i \log \frac{u_i(\bbb_i^{t+1}, \bbp^{t+1})}{u_i(\bbb_i^t, \bbp^t)} \leq  - \sum_j b_{ij}^t \Delta_j^t + \sum_j b_{ij}^t \rho_i \left(\Delta_j^t\right)^2 - \rho_i  \sum_j b_{ij}^{t+1} \Delta_j^t + \rho_i \sum_j b_{ij}^t \Delta_j^t.
\end{align*} \label{lem::2}
\end{lemma}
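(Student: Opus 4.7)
The plan is to follow the template of Lemma~\ref{lem::linear}, adapted to account for the curvature of CES utilities when $\rho_i\in(0,1)$. The starting point is to split the log-utility change by inserting the intermediate pair $(\bbb^t,\bbp^{t+1})$:
\begin{align*}
\log \frac{u_i(\bbb^{t+1},\bbp^{t+1})}{u_i(\bbb^t,\bbp^t)}
\;=\; \log \frac{u_i(\bbb^t, \bbp^{t+1})}{u_i(\bbb^t, \bbp^t)}
\;+\; \log \frac{u_i(\bbb^{t+1}, \bbp^{t+1})}{u_i(\bbb^t, \bbp^{t+1})}.
\end{align*}
The first summand isolates the effect of the price update at fixed allocation, and the second captures the buyer's re-optimization at fixed prices $\bbp^{t+1}$.

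For the first summand, I exploit the buyer's time-$t$ first-order condition, $a_{ij}(b_{ij}^t/p_j^t)^{\rho_i} = (u_i(\bbb^t,\bbp^t)^{\rho_i}/e_i)\,b_{ij}^t$, to obtain the exact identity $u_i(\bbb^t,\bbp^{t+1})^{\rho_i}/u_i(\bbb^t,\bbp^t)^{\rho_i} = (1/e_i)\sum_j b_{ij}^t e^{-\rho_i\Delta_j^t}$. Applying $\log x \le x-1$, then $e^y \le 1+y+y^2$ (which holds for $|y|\le 1$ and is applicable since $|\rho_i\Delta_j^t|\le 1$), and finally multiplying by $e_i/\rho_i$, this produces the contribution $-\sum_j b_{ij}^t\Delta_j^t + \rho_i\sum_j b_{ij}^t(\Delta_j^t)^2$, matching two of the target terms.

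For the second summand, which is nonnegative by optimality of $\bbb^{t+1}$ at $\bbp^{t+1}$, I plan to use concavity of $\log u_i(\cdot,\bbp^{t+1})$ (log-concave in $\bbb$ for $\rho_i\in(0,1)$) to tangent-bound it at $\bbb^t$:
\begin{align*}
\log u_i(\bbb^{t+1},\bbp^{t+1})-\log u_i(\bbb^t,\bbp^{t+1}) \;\le\; \nabla_{\bbb}\log u_i(\bbb^t,\bbp^{t+1})^\top(\bbb^{t+1}-\bbb^t).
\end{align*}
A direct computation using the time-$t$ FOC gives $\partial_{b_j}\log u_i(\bbb^t,\bbp^{t+1}) = e^{-\rho_i\Delta_j^t}/\sum_k b_{ik}^t e^{-\rho_i\Delta_k^t}$. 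Invoking the budget identity $\sum_j(b_{ij}^{t+1}-b_{ij}^t)=0$, I can replace $e^{-\rho_i\Delta_j^t}$ in the numerator by $e^{-\rho_i\Delta_j^t}-1$, which is close to $-\rho_i\Delta_j^t$; using $e^y-1 \ge y$ and the fact that the denominator is close to $e_i$, the leading contribution evaluates to $\rho_i\sum_j(b_{ij}^t-b_{ij}^{t+1})\Delta_j^t$, which supplies the remaining two target terms. Summing the two summand bounds then yields the claim.

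The main obstacle is the clean extraction of this leading term: the normalizing denominator $\sum_k b_{ik}^t e^{-\rho_i\Delta_k^t}$ differs from $e_i$ by an $O(\rho_i)$ factor, and the linearization $e^{-\rho_i\Delta_j^t}-1 \mapsto -\rho_i\Delta_j^t$ introduces quadratic error terms whose sign depends on that of $b_{ij}^{t+1}-b_{ij}^t$. The constraints $0<\rho_i<1$ and $|\Delta_j^t|\le 1$ are both used to show that these residuals can be absorbed into the $\rho_i\sum_j b_{ij}^t(\Delta_j^t)^2$ correction already produced by the first summand, leaving exactly the stated bound.
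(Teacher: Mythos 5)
Your decomposition and your treatment of the price-change summand (the term $\log\frac{u_i(\bbb^t,\bbp^{t+1})}{u_i(\bbb^t,\bbp^t)}$) match the paper exactly: the exact identity via the FOC, then $\log x\le x-1$, then $e^y\le 1+y+y^2$ produces $-\sum_j b_{ij}^t\Delta_j^t+\rho_i\sum_j b_{ij}^t(\Delta_j^t)^2$ with no slack left over. The problem is the re-optimization summand. Here the paper does something structurally different from your tangent bound: it substitutes the explicit best-response formulas for \emph{both} $b_{ij}^{t+1}$ and $b_{ij}^t$, simplifies the exponents using $1+(1-c_i)\rho_i=1-c_i$ and $(c_i-1)\rho_i=c_i$, and then applies the log-sum (Jensen) inequality twice with weights $b_{ij}^{t+1}/e_i$ and $b_{ij}^t/e_i$ respectively, giving $-\rho_i\sum_j b_{ij}^{t+1}\Delta_j^t+\rho_i\sum_j b_{ij}^t\Delta_j^t$ \emph{exactly}, with no quadratic residual. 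Your approach, by contrast, only ever Taylor-expands around $\bbb^t$; it never uses the form of $\bbb^{t+1}$ beyond $\sum_j(b_{ij}^{t+1}-b_{ij}^t)=0$, so there is no way for the crucial $\sum_j b_{ij}^{t+1}\Delta_j^t$ term to appear cleanly.

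Concretely, your tangent bound gives $e_i\log\frac{u_i(\bbb^{t+1},\bbp^{t+1})}{u_i(\bbb^t,\bbp^{t+1})}\le\frac{e_i}{D}\sum_j\bigl(e^{-\rho_i\Delta_j^t}-1\bigr)\bigl(b_{ij}^{t+1}-b_{ij}^t\bigr)$ with $D=\sum_k b_{ik}^t e^{-\rho_i\Delta_k^t}$. Subtracting the target $\rho_i\sum_j(b_{ij}^t-b_{ij}^{t+1})\Delta_j^t$ leaves a residual of order $\sum_j(\rho_i\Delta_j^t)^2\,|b_{ij}^{t+1}-b_{ij}^t|$ (plus a contribution from $e_i/D\ne 1$). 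This residual is weighted by $|b_{ij}^{t+1}-b_{ij}^t|$, not $b_{ij}^t$, so when a good $j$ that was barely purchased at time $t$ attracts a large share of the budget at $t+1$, the residual can dwarf $\rho_i b_{ij}^t(\Delta_j^t)^2$; and since the first summand has already consumed the entire $\rho_i\sum_j b_{ij}^t(\Delta_j^t)^2$ term in the lemma's right-hand side, there is nothing left to absorb it. A concrete check: two goods, $a_{i1}=a_{i2}=1$, $e_i=1$, $\rho_i=0.9$ (so $c_i=-9$), $p_1^t=2$, $p_2^t=1$, $\Delta_1^t=-0.3$, $\Delta_2^t=0.3$. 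Then $b_{i1}^t\approx 0.002$, $b_{i2}^t\approx 0.998$, $b_{i1}^{t+1}\approx 0.302$, $b_{i2}^{t+1}\approx 0.698$; the lemma's RHS evaluates to about $-0.056$, whereas your combined bound is about $-0.003$. Both are valid upper bounds on the true value ($\approx -0.26$), but yours is strictly larger than the lemma's RHS, so it does not establish the lemma. To fix this you should abandon the generic tangent bound and use the paper's algebraic identity that inserts $b_{ij}^{t+1}/e_i$ as Jensen weights.
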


\begin{lemma} \label{lem::1}
For any $\rho_i > 0$, if $|\lambda c_i | \leq 1$ and $| \Delta_j^t | \leq 1$ for all $j$ and $t$, then 
\begin{align*}
 e_i \log \frac{u_i(\bbb_i^{t+1}, \bbp^{t+1})}{u_i(\bbb_i^{t}, \bbp^{t})}  \leq - \sum_j b_{ij}^t \Delta_j^t - \sum_j b_{ij}^t c_i \left(\Delta_j^t\right)^2.
\end{align*}
\end{lemma}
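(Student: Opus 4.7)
}
The plan is to exploit the closed-form indirect utility for a CES buyer who best-responds, reduce the log-ratio to a scalar quantity in the prices, and then bound an exponential/log combination by its quadratic Taylor approximation. Note that for this lemma $\rho_i > 0$ and the buyer is not linear (that case is Lemma~\ref{lem::linear} and requires $|\lambda c_i| \le 1$, so $\rho_i \in (0,1)$ and $c_i = \rho_i/(\rho_i-1) < 0$; this sign will matter at the end.

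First, setting $\alpha_{ij} = a_{ij}^{1/(1-\rho_i)}$ and $S_i(\bbp) = \sum_k \alpha_{ik} p_k^{c_i}$, the CES first-order conditions give the best-response spending $b_{ij}^* = e_i\, \alpha_{ij} p_j^{c_i}/S_i(\bbp)$ and the indirect utility $u_i^*(\bbp) = e_i\, S_i(\bbp)^{-1/c_i}$. Since buyers best-respond at both times, $u_i(\bbb_i^t,\bbp^t) = u_i^*(\bbp^t)$ and $u_i(\bbb_i^{t+1},\bbp^{t+1}) = u_i^*(\bbp^{t+1})$, so
\[
e_i \log \frac{u_i(\bbb_i^{t+1},\bbp^{t+1})}{u_i(\bbb_i^{t},\bbp^{t})} \;=\; -\frac{e_i}{c_i}\,\log \frac{S_i(\bbp^{t+1})}{S_i(\bbp^t)}.
\]
Using the update rule $p_j^{t+1} = p_j^t\, e^{\Delta_j^t}$ and the identity $b_{ij}^t/e_i = \alpha_{ij}(p_j^t)^{c_i}/S_i(\bbp^t)$, the ratio rewrites as a convex combination of exponentials:
\[
\frac{S_i(\bbp^{t+1})}{S_i(\bbp^t)} \;=\; \sum_j \frac{b_{ij}^t}{e_i}\, e^{c_i \Delta_j^t}.
\]

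Second, I will apply two elementary inequalities. The hypothesis $|\lambda c_i|\le 1$, together with the fact that the update rule forces $|\Delta_j^t| \le \lambda$, gives $|c_i \Delta_j^t| \le 1$, so the Taylor-type bound $e^y \le 1+y+y^2$ is valid with $y = c_i\Delta_j^t$. Taking a convex combination with weights $b_{ij}^t/e_i$ (which sum to $1$) and then using $\log(1+z) \le z$ yields
\[
\log \frac{S_i(\bbp^{t+1})}{S_i(\bbp^t)} \;\le\; c_i \sum_j \frac{b_{ij}^t}{e_i}\,\Delta_j^t \;+\; c_i^2 \sum_j \frac{b_{ij}^t}{e_i}\,(\Delta_j^t)^2.
\]

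Finally, because $c_i < 0$ for $\rho_i \in (0,1)$, the factor $-e_i/c_i$ is positive and multiplying preserves the inequality, giving
\[
e_i \log \frac{u_i(\bbb_i^{t+1},\bbp^{t+1})}{u_i(\bbb_i^{t},\bbp^{t})} \;\le\; -\sum_j b_{ij}^t \Delta_j^t \;-\; c_i \sum_j b_{ij}^t (\Delta_j^t)^2,
\]
which is exactly the claim. The main obstacle is book-keeping: one must carefully track the sign of $c_i$ so that the final multiplication goes the right way, and one must verify that the ``indirect-utility'' reduction is legitimate (i.e.\ that the buyer is best-responding at both prices, which is built into the definition of $b_{ij}^t$). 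The analytic content beyond that is just $e^y \le 1 + y + y^2$ on $|y|\le 1$ and $\log(1+z) \le z$, neither of which is delicate.
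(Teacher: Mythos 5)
Your proposal is correct and follows essentially the same route as the paper: both reduce to the indirect-utility identity $e_i\log\frac{u_i(\bbb_i^{t+1},\bbp^{t+1})}{u_i(\bbb_i^{t},\bbp^{t})}=-\frac{e_i}{c_i}\log\sum_j\frac{b_{ij}^t}{e_i}e^{c_i\Delta_j^t}$ (the paper's Claim~\ref{clm::util-ratio-bdd}), then combine $e^y\le 1+y+y^2$ for $|y|\le 1$ with $\log x\le x-1$, using $c_i<0$ so that multiplying by $-e_i/c_i>0$ preserves the inequality. The only cosmetic difference is the order of the two elementary bounds: the paper applies $\log x\le x-1$ first and the Taylor bound second, while you do the reverse; the algebra is identical.
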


\begin{lemma} \label{lem::comp}
If buyer $i$ has a complementary utility function, then
\begin{align*}
e_i \log \frac{u_i(\bbb^{t+1}_i, \bbp^{t+1}_i)}{u_i(\bbb^t_i, \bbp^t_i)} \leq - \sum_j b_{ij}^t \Delta_j^t.
\end{align*}
\end{lemma}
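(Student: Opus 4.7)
The plan is to establish that $\phi_i(\bbp) := \log u_i(\bbb_i(\bbp), \bbp)$, the log of buyer $i$'s achievable utility at prices $\bbp$ when spending optimally, is a concave function of $\log \bbp$ for any complementary CES utility, and that its gradient in log-price coordinates satisfies $\partial \phi_i/\partial \log p_j = -b_{ij}(\bbp)/e_i$. Once these two facts are in hand, the lemma follows immediately from the first-order concavity inequality between $\log \bbp^t$ and $\log \bbp^{t+1}$ (noting $\Delta_j^t = \log p_j^{t+1} - \log p_j^t$) after multiplying through by $e_i$.

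For CES utility with $\rho_i < 0$, the standard Lagrangian calculation yields $b_{ij}(\bbp) = e_i a_{ij}^{s_i} p_j^{1-s_i}/S(\bbp)$ with $s_i := 1/(1-\rho_i) \in (0, 1)$ and $S(\bbp) := \sum_k a_{ik}^{s_i} p_k^{1-s_i}$, so $\phi_i(\bbp) = \log e_i + \frac{1-\rho_i}{\rho_i} \log S(\bbp)$. In log-price coordinates $\bby = \log \bbp$, the function $\log S(e^{\bby}) = \log \sum_k a_{ik}^{s_i} e^{(1-s_i) y_k}$ is a log-sum-exp of affine functions and hence convex in $\bby$, while the prefactor $(1-\rho_i)/\rho_i$ is negative since $\rho_i < 0$; hence $\phi_i$ is concave in $\bby$. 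Direct differentiation, combined with the identity $(1-s_i)(1-\rho_i)/\rho_i = -1$, produces $\partial \phi_i/\partial y_j = -b_{ij}(\bbp)/e_i$ as claimed.

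The Leontief case $\rho_i = -\infty$ sits outside the CES parametrization, so I would treat it separately: $u_i(\bbb_i(\bbp), \bbp) = e_i/\sum_k c_{ik} p_k$ and the weights $b_{ik}^t/e_i = c_{ik} p_k^t/\sum_\ell c_{i\ell} p_\ell^t$ form a probability distribution, so an application of Jensen's inequality to the concave function $\log$ gives
\begin{align*}
\log \frac{u_i(\bbb_i^t, \bbp^t)}{u_i(\bbb_i^{t+1}, \bbp^{t+1})} = \log \sum_k \frac{b_{ik}^t}{e_i} e^{\Delta_k^t} \ge \sum_k \frac{b_{ik}^t}{e_i} \Delta_k^t,
\end{align*}
which rearranges to the lemma. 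The Cobb--Douglas edge case $\rho_i = 0$ is immediate because $\phi_i$ is affine in $\log \bbp$ there. No substantial obstacle is anticipated; the only care needed is to verify that both sign factors in $(1-\rho_i)/\rho_i$ and the outer log-sum-exp composition align to yield concavity (rather than the convexity one sees in the substitutes regime $\rho_i > 0$, which is why a different argument was needed in Lemmas~\ref{lem::linear}--\ref{lem::1}).
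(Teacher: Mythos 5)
Your proof is correct and, once unwound, rests on the same mathematical fact as the paper's: the concavity of $\log$ applied to the weighted sum $\sum_j \frac{b_{ij}^t}{e_i}\bigl(\frac{p_j^{t+1}}{p_j^t}\bigr)^{c_i}$, which is exactly Jensen's inequality. The paper's route (Claim~\ref{clm::util-ratio-bdd}) is to first write
\begin{align*}
e_i \log \frac{u_i(\bbb_i^{t+1}, \bbp^{t+1})}{u_i(\bbb_i^{t}, \bbp^{t})} = -\frac{e_i}{c_i}\log\sum_j\frac{b_{ij}^t}{e_i}\Bigl(\frac{p_j^{t+1}}{p_j^t}\Bigr)^{c_i},
\end{align*}
then apply Jensen and use $c_i>0$ in the complementary range to flip the inequality. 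You instead package the same content as two structural facts about $\phi_i(\bby)=\log u_i(\bbb_i(e^\bby),e^\bby)$: (i) it is a negative multiple of a log-sum-exp of affine functions of $\bby$ and hence concave for $\rho_i<0$, and (ii) $\partial\phi_i/\partial y_j=-b_{ij}/e_i$; the lemma is then the first-order concavity inequality. That is a genuinely cleaner and more reusable framing — it makes transparent why the sign of $\rho_i$ matters (the prefactor $(1-\rho_i)/\rho_i$ flips sign at $\rho_i=0$, which is precisely where Lemmas~\ref{lem::2}--\ref{lem::1} must take over with a second-order expansion instead of a pure first-order bound), and it forces you to treat the Leontief ($c_i=1$) and Cobb--Douglas ($c_i=0$) edge cases honestly, where the paper's formula $-e_i/c_i\log(\cdot)$ is degenerate or loses the utility coefficients and is really being invoked as a limit. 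What the paper's terser version buys is that Claim~\ref{clm::util-ratio-bdd} is shared verbatim with the proofs of Lemmas~\ref{lem::2} and~\ref{lem::1}, so the bookkeeping is centralized. Both are sound.
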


We are now ready to prove Lemma~\ref{lem::prog}.

\pfof{Lemma~\ref{lem::prog}}
Recall that $c_i = \frac{\rho_i}{\rho_i - 1}$, and $\sigma$ is a threshold designating the buyers to which Assumption~\ref{ass::eps::market}
applies, namely those with $\rho_i \ge \sigma$. 
We apply Lemma~\eqref{lem::1} to the buyers with $0 < \rho_i \leq \sigma$. 
In order to apply Lemma~\eqref{lem::1}, it suffices to have
\begin{align*}
\lambda \cdot \frac{\sigma}{ 1 - \sigma }\leq 1.
\end{align*}
Therefore, by Lemmas~\ref{lem::linear}--\ref{lem::comp}
and equation~\eqref{eqn::1}, for any $0 < \sigma < 1$ such that $\lambda \frac{\sigma}{1 - \sigma} \leq 1$,
\begin{align*}
\bbF(\bbp^{t+1}) - \bbF(\bbp^t) &= \sum_j p_j (e^{\Delta_j^t}  - 1) - \sum_{ij} b_{ij}^t \Delta_j^t \\
&\hsp\hsp\hsp\hsp - \sum_{ij: 0 < \rho_i < \sigma} b_{ij}^t c_i (\Delta_j^t)^2 + \sum_{ij: \sigma \leq \rho_i < 1} b_{ij}^t \rho_i (\Delta_j^t)^2  \\
&\hsp\hsp\hsp\hsp + \sum_{i: \rho_i \geq \sigma} \rho_i \sum_j \left(b_{ij}^{t} - b_{ij}^{t+1} \right) \Delta_j^t \\
&= \sum_j p_j(e^{\Delta_j^t} - \Delta_j^t - 1) - \sum_j (\sum_i b_{ij}^t - p_j^t) \Delta_j^t \\
&\hsp\hsp\hsp\hsp - \sum_{ij: 0 < \rho_i < \sigma} b_{ij}^t c_i (\Delta_j^t)^2 + \sum_{ij: \sigma \leq \rho_i < 1} b_{ij}^t \rho_i (\Delta_j^t)^2  \\
&\hsp\hsp\hsp\hsp + \sum_{i: \rho_i \geq \sigma} \rho_i \sum_j \left(b_{ij}^{t} - b_{ij}^{t+1} \right) \Delta_j^t.
\end{align*}

Note that $e^{\Delta_j^t} - \Delta_j^t - 1 \leq \left(\Delta_j^t\right)^2$ as $| \Delta_j^t | \leq 1$, and $\sum_i b_{ij}^t - p_j^t = p_j^t z_j^t$. Therefore,
\begin{align*}
\bbF(\bbp^{t+1}) - \bbF(\bbp^t) &\leq \underbrace{\sum_j p_j \left( \Delta_j^t \right)^2}_{A} - \underbrace{\sum_j p_j^t z_j^t \Delta_j^t}_{B} \\
&\hsp\hsp\hsp\hsp \underbrace{- \sum_{ij: \rho_i < \sigma} b_{ij}^t c_i (\Delta_j^t)^2}_{C} + \underbrace{\sum_{ij: \sigma \leq \rho_i < 1} b_{ij}^t \rho_i (\Delta_j^t)^2}_{D}  \\
&\hsp\hsp\hsp\hsp + \sum_{i: \rho_i \geq \sigma} \rho_i \sum_j \left(b_{ij}^{t} - b_{ij}^{t+1} \right) \Delta_j^t.
\end{align*}

It is easy to see that $A \leq \lambda B$ if $| \Delta_j^t | \leq \lambda |\min\{z_j^t, 1\}|$ and $\sign(\Delta_j^t) = \sign(\min\{z_j^t, 1\})$. 
Next, we will bound $C$ and $D$ in terms of $B$. 
To this end, we note that we can omit the portion of $C$ with $\rho_i \le 0$
as for these terms $c_i \ge 0$ and consequently removing them
only increases the RHS expression. We then note that for $\rho_i>0$,
\[
-c_i =- \frac {\rho_i}{\rho_i -1} \le -\frac{\sigma}{\sigma-1} = \frac{\sigma}{\sigma-1}.
\]
For term $D$ we use the simple bound $\rho_i \le 1$.
Thus terms $C$ and $D$ are bounded by
\[
\sum_{i,j: \rho_i >0} \max\left\{\frac{\sigma}{\sigma-1},1\right\} b_{ij}^t 
\left(\Delta_j^t\right)^2.
\]

We now give a bound on this expression in terms of $B$.

\begin{claim}\label{lem::lower::progress} If $| \Delta_j^t | \leq \lambda |\min\{z_j^t, 1\}|$ and $\sign(\Delta_j^t) = \sign(\min\{z_j^t, 1\})$, then
\begin{align*}
\sum_j p_j^t z_j^t \Delta_j^t \geq \frac{1}{2 \lambda} \left(\sum_i b_{ij}^t\right) (\Delta_j^t)^2. 
\end{align*}
\end{claim}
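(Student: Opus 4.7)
The plan is to prove the per-good version of the inequality, namely $p_j^t z_j^t \Delta_j^t \ge \frac{1}{2\lambda}\bigl(\sum_i b_{ij}^t\bigr)(\Delta_j^t)^2$ for each individual $j$; summing over $j$ then yields the stated claim. The key algebraic identity to deploy is
\[
\sum_i b_{ij}^t \;=\; p_j^t \sum_i x_{ij}^t \;=\; p_j^t(1 + z_j^t),
\]
which follows from the definition of $b_{ij}^t$ and of excess demand together with the normalization of supply to one. Since quantities are nonnegative, we also have $z_j^t \ge -1$, so the factor $1+z_j^t$ is nonnegative throughout.

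After substituting this identity and cancelling the common factor $p_j^t > 0$, the target reduces to $z_j^t \Delta_j^t \ge \tfrac{1}{2\lambda}(1+z_j^t)(\Delta_j^t)^2$. By the sign hypothesis $\sign(\Delta_j^t) = \sign(\min\{z_j^t,1\})$, either $\Delta_j^t = 0$ (in which case both sides vanish) or $z_j^t$ and $\Delta_j^t$ share the same sign, so $z_j^t \Delta_j^t = |z_j^t|\,|\Delta_j^t|$. Dividing by $|\Delta_j^t|$ and invoking the magnitude hypothesis $|\Delta_j^t| \le \lambda|\min\{z_j^t,1\}|$, it suffices to establish the purely elementary inequality
\[
2|z_j^t| \;\ge\; (1+z_j^t)\,\bigl|\min\{z_j^t,1\}\bigr|.
\]

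I would then finish with a three-case analysis. When $-1 \le z_j^t \le 0$, both $|z_j^t|$ and $|\min\{z_j^t,1\}|$ equal $-z_j^t$, so the inequality rearranges to $z_j^t(z_j^t-1) \ge 0$, which holds because both factors are nonpositive. When $0 \le z_j^t \le 1$, again $|\min\{z_j^t,1\}| = z_j^t$, and after cancelling $z_j^t$ (the $z_j^t = 0$ case being trivial) the condition becomes $2 \ge 1+z_j^t$, which is the case hypothesis. When $z_j^t > 1$, $|\min\{z_j^t,1\}| = 1$ and the inequality reads $2z_j^t \ge 1+z_j^t$, i.e., $z_j^t \ge 1$, which is again the case hypothesis.

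There is no serious obstacle: the entire argument is driven by the one identity $\sum_i b_{ij}^t = p_j^t(1+z_j^t)$ together with the sign/magnitude budget on $\Delta_j^t$. The one subtlety worth emphasizing is that the $\min\{z_j^t,1\}$ cap built into the hypothesis is essential; without it, the quadratic $(1+z_j^t)(\Delta_j^t)^2$ could grow quadratically in $z_j^t$ while the linear term $z_j^t \Delta_j^t$ grows only linearly, and the inequality would fail for large excess demand.
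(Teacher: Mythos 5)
Your proof is correct and follows essentially the same route as the paper's: both hinge on the identity $\sum_i b_{ij}^t = p_j^t(1+z_j^t)$ to rewrite the per-good inequality, then finish with a short case analysis exploiting the sign and magnitude constraints on $\Delta_j^t$. The paper substitutes $\Delta_j^t/\lambda$ for $z_j^t$ and checks the resulting inequality in two cases, whereas you eliminate $\Delta_j^t$ and check an equivalent scalar inequality in $z_j^t$ alone, which is a slightly cleaner bookkeeping of the same idea and also sidesteps the paper's special-case treatment of $z_j^t = -1$.
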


Thus
\begin{align*}
\bbF(\bbp^{t}) - \bbF(\bbp^{t+1}) &\geq 
\sum_j p_j^t z_j^t \Delta_j^t 
\left(1 - \lambda - 2\lambda \max\left\{\frac{\sigma}{\sigma-1},1\right\}\right).
\end{align*}

\end{proof}

\section{Bounding the Distance to the Optimum}
\label{sec:dist-to-opt}

In this section, we provide an upper bound on $\bbF(\bbp) - \bbF(\bbp^*)$.  

\hide{
We follow the approach taken in~\cite{cheung2013tatonnement}.
They pointed out that if $\frac{p_j^*}{p_j} \leq \kappa_j$ for all $j$, then
\begin{align*}
\bbF(\bbp^*) - l_{\bbF}(\bbp^*; \bbp) \geq \sum_j C(\kappa) x_j \frac{(p_j^* - p_j)^2}{p_j},
\end{align*}
where $C(\kappa) = \max\left\{\frac{h_c(\kappa)}{c}, \frac{\kappa - 1 - \log \kappa}{(\kappa - 1)^2}\right\}$, $h_c(\kappa) = \frac{1 - \kappa^c + c(\kappa-1)}{(\kappa - 1)^2}$ for any $\kappa \geq 0$ except $\kappa = 1$, and $h_c(1) = \frac{c(1 - c)}{2}$ and $c = \max_i c_i$. 

Therefore, 
}

Equation \eqref{eqn::strg-conv-bdd} yields
\begin{align*}
\bbF(\bbp^t) - \bbF(\bbp^*) &\leq \sum_j z_j^t (p_j^* - p_j^t) - \sum_j C(\kappa_j) x^t_j \frac{(p_j^* - p_j^t)^2}{p_j^t} \\
&\leq \max_{\bbp' \geq \bbr} \sum_j \left(z_j^t (p'_j - p_j^t) -  C(\kappa_j) x^t_j \frac{(p'_j - p_j^t)^2}{p_j^t} \right). \numberthis \label{ineq::dist::opt::1}
\end{align*}

In \cite{cheung2013tatonnement}, they prove that 
\begin{align*}
\max_{p'} \left(z_j^t (p'_j - p_j^t) -  C(\kappa_j) x^t_j \frac{(p'_j - p_j^t)^2}{p_j^t} \right) \leq \max \left\{ 2, \frac{1}{2 C(\kappa_j)} \right\} \left(z_j^t \right)^2 p_j^t. \numberthis \label{ineq::dist::opt::2}
\end{align*}

Here, we want to prove one more upper bound.
\begin{lemma} \label{lem::linear::dis}
If $p_j^{t+1} = r_j$ and $z_j^t \leq 0$, then 
\begin{align*}
\max_{p'_j \geq r_j}  \left(z_j^t (p'_j - p_j^t) -  C(\kappa_j) x^t_j \frac{(p'_j - p_j^t)^2}{p_j^t} \right) \leq z_j^t p_j^t \log \frac{p_j^{t+1}}{p_j^t}.
\end{align*}
\end{lemma}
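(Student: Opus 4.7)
The plan is to maximize the concave quadratic $f(p') := z_j^t(p' - p_j^t) - C(\kappa_j) x_j^t (p' - p_j^t)^2/p_j^t$ over $[r_j,\infty)$ and compare with the RHS. Set $\alpha := r_j/p_j^t = p_j^{t+1}/p_j^t \in (0,1]$; note that both sides of the claim are $\ge 0$ because $z_j^t \le 0$ and $\log \alpha \le 0$. Since $z_j^t \le 0$, the unconstrained maximizer $p^{\dagger} = p_j^t + z_j^t p_j^t/(2 C(\kappa_j) x_j^t)$ satisfies $p^{\dagger} \le p_j^t$, and the constrained max over $[r_j,\infty)$ splits into two cases depending on whether $p^{\dagger} \ge r_j$.

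In the \emph{boundary case} $p^{\dagger} < r_j$ (which also covers $x_j^t = 0$, where $f$ is linear and decreasing), $f$ is strictly decreasing on $[r_j,\infty)$, so the max equals $f(r_j) \le z_j^t(r_j - p_j^t)$ after dropping the nonpositive quadratic term. The standard inequality $\log \alpha \le \alpha - 1$ gives $p_j^t \log \alpha \le r_j - p_j^t$, and multiplying by $z_j^t \le 0$ reverses this to $z_j^t(r_j - p_j^t) \le z_j^t p_j^t \log \alpha$, which is the desired bound. In the \emph{interior case} $p^{\dagger} \ge r_j$ (which requires $x_j^t > 0$), the max value equals $(z_j^t)^2 p_j^t/(4 C(\kappa_j) x_j^t)$. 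Introducing $u := -z_j^t/(2 C(\kappa_j) x_j^t) \ge 0$, feasibility of $p^{\dagger}$ becomes $u \le 1 - \alpha$, and the max value rewrites as $-z_j^t p_j^t u/2$; dividing the target inequality by $-z_j^t p_j^t \ge 0$ reduces it to $u/2 \le -\log \alpha$. Since $\alpha \le 1 - u$, it suffices to prove $\log(1-u) \le -u/2$ on $[0,1)$, which follows because $g(u) := -u/2 - \log(1-u)$ has $g(0) = 0$ and $g'(u) = -1/2 + 1/(1-u) > 0$ on $(0,1)$.

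The main work is in the interior case; the slightly nonstandard logarithmic estimate $\log(1-u) \le -u/2$ (rather than the usual $\log \alpha \le \alpha - 1$) is what encodes the fact that the quadratic penalty in $f$ prevents $p^{\dagger}$ from drifting too far below $p_j^t$. Everything else is algebra and a single derivative check, so I expect no serious obstacles beyond identifying the correct elementary inequality to invoke in each case.
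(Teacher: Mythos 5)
Your proof is correct and follows essentially the same two-case strategy as the paper (split on whether the unconstrained maximizer $p^{\dagger}=p_j^{opt}$ exceeds $r_j$, compute the constrained maximum in each case, and finish with the inequality $\log x \le x-1$). One small remark: the ``nonstandard'' estimate $\log(1-u)\le -u/2$ you verified by a derivative check actually follows immediately from the standard $\log(1-u)\le -u$ (since $-u\le -u/2$ for $u\ge0$), so that step was unnecessary — indeed the paper's version of the interior case keeps this factor of $1/2$ and proves the slightly stronger bound $\tfrac12 z_j^t p_j^t\log\tfrac{r_j}{p_j^t}$ before relaxing it.
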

It's easy to see that our update rule has $\Delta_j^t = \log \frac{p_j^{t+1}}{p_j^t}$.
Now, combining \eqref{ineq::dist::opt::1}, \eqref{ineq::dist::opt::2}, and Lemma~\ref{lem::linear::dis}, yields Lemma~\ref{lem::dis::opt}.
\hide{
\begin{lemma}\label{lem::dis::opt}
\begin{align*}
\bbF(\bbp^t) - \bbF(\bbp^*) \leq \max \left\{2, \frac{1}{2 C(\kappa)} \right\} \frac{E}{\lambda r_j} p_j^t z_j^t \Delta_j^t.
\end{align*}
\end{lemma}
}

\pfof{Lemma~\ref{lem::linear::dis}}
Let 
\begin{align*}
\mathcal{I}(p'_j) = z_j^t (p'_j - p_j^t) -  C(\kappa_j) x^t_j \frac{(p'_j - p_j^t)^2}{p_j^t}
\end{align*}
and in the setting without reserve prices, let
\begin{align*}
p_j^{opt} = \max_{p'_j} \left\{\mathcal{I}(p'_j) \right\}.
\end{align*}

In this case, the maximum value of $\mathcal{I}(p')$ is 
\begin{align*}
\frac{\left(z_j^t\right)^2 p_j^t}{4 C(\kappa_j) x_j},
\end{align*}
and the optimum value is
\begin{align*}
p_j^{opt} = p_j^t + \frac{z_j^t p_j^t}{2 C(\kappa_j) x_j^t}.
\end{align*}

\noindent
{\bf Case 1}: $p_j^{opt} \geq r_j$.
\\
This implies 
\begin{align*}
p_j^t + \frac{z_j^t p_j^t}{2 C(\kappa_j) x_j^t} \geq r_j.
\end{align*}
Therefore, since $z_j^t \leq 0$,
\begin{align*}
\frac{\left(z_j^t\right)^2 p_j^t}{4 C(\kappa_j) x_j} \leq \frac{1}{2} z_j^t p_j^t \frac{r_j - p_j^t}{p_j^t} \leq \frac{1}{2} z_j^t p_j^t \log \frac{r_j}{p_j^t}.
\end{align*}

\noindent
{\bf Case 2}: $p_j^{opt} < r_j$. \\
Note that $\mathcal{I}(p'_j)$ is a quadratic function. 
On the domain $p'_j \ge r_j$, it achieves its maximum value when $p'_j = r_j$. 
Therefore, the maximum value is
\begin{align*}
\frac{\left(z_j^t\right)^2 p_j^t}{4 C(\kappa_j) x_j} - \frac{C(\kappa_j) x_j^t}{p_j^t} (r_j - p_j^{opt})^2 &= \frac{\left(z_j^t\right)^2 p_j^t}{4 C(\kappa_j) x_j} - \frac{C(\kappa_j) x_j^t}{p_j^t} \left(r_j - p_j^t - \frac{z_j^t p_j^t}{2 C(\kappa_j) x_j^t}\right)^2 \\
& = - C(\kappa_j) x_j^t \frac{(r_j - p_j^t)^2}{p_j^t} + z_j^t(r_j - p_j^t)
\end{align*}
Since $C(\kappa_j) x_j^t \frac{(r_j - p_j^t)^2}{p_j^t} \geq 0$, this is less than
\begin{align*}
z_j^t(r_j - p_j^t) \leq z_j^t p_j^t \frac{r_j - p_j^t}{p_j^t} \leq z_j^t p_j^t \log \frac{r_j}{p_j^t}.
\end{align*}
\end{proof}

\pfof{Lemma~\ref{lem::dis::opt}}
{\bf Case 1}: $p_j^{t+1} > r_j^t$.\\
Then $\Delta_j^t = \lambda \min\{z_j^t, 1\}$. Note that $z_j^t  = \frac{\sum_i b_{ij}^t - p_j^t}{p_j^t} \leq \frac{E}{r_j}$. Therefore, $| \Delta_j^t | \geq \frac{\lambda r_j}{E} |z_j^t|$. Since $\Delta_j^t$ and $z_j^t$ are both positive or both negative, combining with \eqref{ineq::dist::opt::1} and \eqref{ineq::dist::opt::2} yields
\begin{align*}
\bbF(\bbp^t) - \bbF(\bbp^*) \leq \max \left\{2, \frac{1}{2C(\kappa_j)}\right\} (z_j^t)^2 p_j^t \leq \max\left\{2, \frac{1}{2C(\kappa_j)}\right\} \frac{E}{\lambda r_j} p_j^t z_j^t \Delta_j^t.
\end{align*}

\noindent
{\bf Case 2}: $p_j^{t+1} = r_j^t$.\\
 By \eqref{ineq::dist::opt::1} and using Lemma~\ref{lem::linear::dis}, the result follows as $\Delta_j^t = \log \frac{p_j^{t+1}}{p_j^t}$, 
and $E \ge r_j$ by assumption.
\end{proof}
\section{Dynamical Markets}
\label{sec::dyn-markets}

In this section, we will consider dynamical market. For each round, there can be a small change to the supplies, budgets and buyers' preferences.
We will seek to show that tatonnement can cause the prices to
pursue the market equilibrium. 
Note that, in general, we need to modify the potential function to account for the possibly changing supplies $w_j^t$ for item $j$ at time $t$; the new potential function is
\begin{align*}
\sum_j w_j^t p^t_j + \sum_i e_i \log \max_{\bbx_i \cdot \bbp^t = e_i^t} u_i(x_i),
\end{align*}
and our update rule will be 
\begin{align*}
p_j^{t+1} = p_j^t e^{\Delta_j^t}
\end{align*}
where $\Delta_j^t = \lambda \max \left\{\frac{z_j^t}{w_j} , 1\right\}$.

Cheung, Hoefer, and Nakhe \cite{cheung2019tracing} analyze the following settings:
\begin{itemize}
\item \emph{Supply Change} If at time $t$, the supplies change by at most 
$\epsilon$, then by at most $(P + E) \epsilon$, where $P$ is the maximum price at time $t+1$;
\item \emph{Budget Change} If at time $t$, the sum of the absolute values of the changes to the buyers' budgets is at most $\epsilon$, then the potential function changes by at most $C \epsilon$, where $C$ is the maximum possible ratio between a buyer's utility at time $t+1$ and her utility at the market equilibrium at time $t+1$;
\item \emph{Utility Change} If at time $t$, given any prices, the  ratio of the utility difference when best responding is bounded by $\chi$, then the potential function changes by at most $2E \chi$.
\end{itemize}
In order to analyze the effect of these changes over time, in this paper, we let $D$ denote the maximum change to the potential function at each round. 
We let $\bbp^{t,*}$ denote the equilibrium prices at time $t$.
We have the following theorem.
\begin{theorem}
\label{thm::second-result}
For any $0 < \theta < 1$, if $\frac {\lambda \sigma} {1 - \sigma} \le 1$
and $\kappa \ge \max_j \frac{p_j^*}{r_j}$, then
\begin{align*}
\bbF^{t+1}(\bbp^{t+1}) - \bbF^{t+1}(\bbp^{t+1,*}) \leq (1 - \alpha)^t \left( \bbF(\bbp^{0}) - \bbF(\bbp^{0,*})\right) + \frac{1}{\alpha} \left(2 \frac{\lambda \epsilon^2 \mathcal{M}}{ \theta} + D\right),
\end{align*}
where $\alpha =  \frac{\left(1 - \lambda - 2 \lambda \cdot \max\left\{ \frac{\sigma}{1 - \sigma}, 1 \right\} - 2 \epsilon - 2 \theta\right)}{\max_j \left\{\max \left\{2, \frac{1}{2 C(\kappa)} \right\} \frac{E}{\lambda r_j}\right\}}$ and $\mathcal{M} =  \max \left\{ \sum_j \hat{w}_j p_j^0, \left(\left(e^\lambda - 2 \lambda \right) \frac{1 + 2 \lambda - e^\lambda}{\lambda} + \lambda\right) \left(E + \sum_j r_j\right) \right\}$. Also, $E$ here will be the maximum possible total money over time and $\hat{w}_j$ will be the maximum supply of item $j$ over time.
Furthermore,
if $\bbF^t(\bbp^{t}) - \bbF^t(\bbp^{t,*}) \ge \frac{2}{\alpha} \left(\frac{2 \lambda \epsilon^2 \mathcal{M}}{\theta} + D \right)$ then
\begin{align*}
\bbF^{t+1}(\bbp^{t+1}) - \bbF^{t+1}(\bbp^{t+1,*}) \leq \left(1 -\frac  {\alpha}{2}\right) \left( \bbF^t(\bbp^{t}) - \bbF^t(\bbp^*)\right).
\end{align*}
\end{theorem}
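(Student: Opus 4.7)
\textbf{Proof plan for Theorem~\ref{thm::second-result}.} The plan is to reduce the dynamical statement to a small perturbation of the static argument that proves Theorem~\ref{thm::main-result}. First I would re-derive, for the time-$t$ potential $\bbF^t$, a single-step progress bound of the same form as \eqref{ineq::used::later}, now using the modified update $\Delta_j^t=\lambda\min\{z_j^t/w_j^t,1\}$ (I believe the $\max$ in the statement is a typo). Concretely, the progress lemma (Lemma~\ref{lem::prog}) relies only on bounds of the shape $|\Delta_j^t|\le\lambda|\min\{z_j^t/w_j^t,1\}|$ and the Assumption~\ref{ass::eps::market} on $|b_{ij}^t-b_{ij}^{t+1}|$, both of which remain in force at each individual round; and the distance-to-optimum bound (Lemma~\ref{lem::dis::opt}) needs only $p_j^*/r_j\le\kappa$, which we assume at every $t$. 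Combining the two exactly as in the proof of Theorem~\ref{thm::main-result} yields
\begin{align*}
\bbF^{t}(\bbp^{t+1})-\bbF^{t}(\bbp^{t,*})
\;\le\;(1-\alpha)\bigl(\bbF^{t}(\bbp^{t})-\bbF^{t}(\bbp^{t,*})\bigr)+\tfrac{2\lambda\epsilon^2\mathcal{M}}{\theta}.
\end{align*}

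Next I would handle the change in the potential from time $t$ to time $t+1$. By the three bullet points of Cheung--Hoefer--Nakhe imported from~\cite{cheung2019tracing}, the potential at any fixed price vector, as well as its value at the equilibrium, changes by at most $D$ in total per round (by definition of $D$). Combining this with the single-step bound above gives
\begin{align*}
\bbF^{t+1}(\bbp^{t+1})-\bbF^{t+1}(\bbp^{t+1,*})
\;\le\;(1-\alpha)\bigl(\bbF^{t}(\bbp^{t})-\bbF^{t}(\bbp^{t,*})\bigr)+\tfrac{2\lambda\epsilon^2\mathcal{M}}{\theta}+D.
\end{align*}
Unrolling this inequality across $t$ rounds (and summing the geometric series $\sum_{s\ge 0}(1-\alpha)^s=1/\alpha$) yields exactly the first displayed inequality of the theorem, with the additive slack $\tfrac{1}{\alpha}\bigl(\tfrac{2\lambda\epsilon^2\mathcal{M}}{\theta}+D\bigr)$.

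For the second (contraction) claim I would argue exactly as in the proof of Theorem~\ref{thm::main-result}: when $\bbF^{t}(\bbp^{t})-\bbF^{t}(\bbp^{t,*})\ge\tfrac{2}{\alpha}\bigl(\tfrac{2\lambda\epsilon^2\mathcal{M}}{\theta}+D\bigr)$, the additive error $\tfrac{2\lambda\epsilon^2\mathcal{M}}{\theta}+D$ is at most $\tfrac{\alpha}{2}\bigl(\bbF^{t}(\bbp^{t})-\bbF^{t}(\bbp^{t,*})\bigr)$, so the one-step inequality above collapses to
\begin{align*}
\bbF^{t+1}(\bbp^{t+1})-\bbF^{t+1}(\bbp^{t+1,*})
\;\le\;\bigl(1-\tfrac{\alpha}{2}\bigr)\bigl(\bbF^{t}(\bbp^{t})-\bbF^{t}(\bbp^{t,*})\bigr),
\end{align*}
which is what we want.

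The main obstacle I anticipate is the book-keeping step that bounds $|\bbF^{t+1}(\bbp)-\bbF^{t}(\bbp)|$ and $|\bbF^{t+1}(\bbp^{t+1,*})-\bbF^{t}(\bbp^{t,*})|$ simultaneously by $D$ across the three different kinds of change (supply, budget, utility); one has to invoke each of the three bullets from~\cite{cheung2019tracing} under the correct hypothesis, and in the supply-change case one additionally has to track the modified first term $\sum_j w_j^t p_j^t$ in the potential, which is why $\mathcal{M}$ is defined with $\hat{w}_j$ (the worst-case supply) rather than unit supply. A secondary issue is to make sure that Lemma~\ref{lem::sum::price}, which controls $\sum_j w_j^{t+1}p_j^{t+1}$ and hence $\mathcal{M}$, still goes through when supplies vary over time; here it suffices to replace the unit supplies by their running maxima $\hat{w}_j$ at every place in the static proof where a supply appears, a uniform substitution that does not otherwise change the argument.
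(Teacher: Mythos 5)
Your plan is essentially the paper's own argument, just spelled out: the paper's proof is a one-liner stating that Theorem~\ref{thm::second-result} follows from \eqref{ineq::used::later} by replacing the additive slack $2\lambda\epsilon^2\mathcal{M}/\theta$ with $2\lambda\epsilon^2\mathcal{M}/\theta + D$, which is exactly the one-step recursion you derive before unrolling the geometric series. Your observations about the $\max$/$\min$ typo in the dynamical update rule and about carrying $\hat{w}_j$ through Lemma~\ref{lem::sum::price} are correct and address bookkeeping that the paper leaves implicit.
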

\begin{proof}
This theorem follows directly if we replace $2 \frac{\lambda \epsilon^2 \mathcal{M}}{ \theta}$ by $2 \frac{\lambda \epsilon^2 \mathcal{M}}{ \theta} + D$ in \eqref{ineq::used::later}.
\end{proof}

\section{Discussion and Open Questions}
\label{sec::discussion}

The strong convexity of the function $F(p)$
reduces as complementarity increases and disappears when Leontief
utility functions are allowed.
Is there a suitable large market assumption that will create strong
convexity in the large when these utility functions are present?
Also, do the current results extend to asynchronous updating
as used in the Ongoing Market model of Cole and Fleischer~\cite{cole2008fast}
and elsewhere~\cite{cheung2012tatonnement,cheung2018dynamics}?

Do analogous results hold for Proportional Response?
Note that for CES utility functions, there is a convex function
on which mirror descent corresponds to the Proportional Response
update~\cite{cheung2018dynamics}, 
and this function is strongly convex away from the
extremes of linear and Leontief utility functions (actually, the
situation is more complicated; the function is a mix of concave and convex).

\newpage

\newpage

\bibliographystyle{plain}
\bibliography{sample}

\newpage

\appendix
\section{Missing Proofs}
\label{sec::missing-proofs}

\pfof{Lemma~\ref{lem::sum::price}}
\begin{align*}
\sum_j p_j^{t+1} &= \sum_j p_j^t e^{\Delta_j^t} 
= \sum_j p_j^t \left(e^{\Delta_j^t} - 1 - \Delta_j^t\right) + \sum_j p_j^t \left( 1 + \Delta_j^t \right).
\end{align*}
If $p_j^{t+1} = r_j$, then $p_j^t \left( 1 + \Delta_j^t \right) \leq p_j^t e^{\Delta_j^t} = r_j \leq (1 - \lambda) p_j^t + \lambda r_j$. Otherwise, $p_j^t \left( 1 + \Delta_j^t \right) \leq p_j^t \left(1 + \lambda z_j^t\right) = (1 - \lambda) p_j^t + \lambda \sum_i b_{ij}^t$. This implies
\begin{align*}
\sum_j p_j^{t+1} \leq \sum_j p_j^t \left(e^{\Delta_j^t} - 1 - \Delta_j^t\right) + \lambda \sum_j r_j + (1 - \lambda) \sum_j p_j^t + \lambda \sum_{ij} b_{ij}^t.
\end{align*}
Since $|\Delta_j^t| \leq \lambda$, $e^{\Delta_j^t} - 1 - \Delta_j^t \leq \max \left\{ e^\lambda - 1  - \lambda, e^{-\lambda} - 1 + \lambda \right\} \leq e^\lambda - 1  - \lambda$.
\begin{align*}
\sum_j p_j^{t+1} \leq \left( e^\lambda - 1  - \lambda\right) \sum_j p_j^t + \lambda\sum_j r_j + (1 - \lambda) \sum_j p_j^t + \lambda \sum_{ij} b_{ij}^t. \numberthis \label{ineq::p::e::1}
\end{align*}
If $\sum_j p_j^t \geq \left(\frac{1 + 2 \lambda - e^\lambda}{\lambda} \right)\left(\sum_{ij} b_{ij}^t + \sum_j r_j \right)$, then rearranging \eqref{ineq::p::e::1} gives $\sum_j p_j^{t+1} \leq \sum_j p_j^t$. Otherwise, replacing $\sum_j p_j^t$ by $\left(\frac{1 + 2 \lambda - e^\lambda}{\lambda} \right)\left(\sum_{ij} b_{ij}^t + \sum_j r_j \right)$ gives\\
$\sum_j p_j^{t+1} \leq \left(\left(e^\lambda - 2 \lambda \right) \frac{1 + 2 \lambda - e^\lambda}{\lambda} + \lambda\right) \left(E + \sum_j r_j\right)$.
Thus \\ 
$\sum_j p_j^{t+1}  \le \max \left\{ \sum_j p_j^t, \left(\left(e^\lambda - 2 \lambda \right) \frac{1 + 2 \lambda - e^\lambda}{\lambda} + \lambda\right) \left(E + \sum_j r_j\right) \right\}$
and the result follows by induction on $t$.
\end{proof}

\pfof{Lemma~\ref{lem::linear}}
For simplicity, we can assume that at any given time each buyer will spend all her money on just one item. To handle the general case, we partition each  buyer into several buyers, each of whom buys one good. Then the same result follows.

So, here we use $j(i, \bbp)$ to denote the item with max utility-per-dollar for buyer $i$ at price $\bbp$ and buyer $i$ spends the whole budget on this item. Note that 
\begin{align*} 
e_i \log \frac{u_i(\bbb^{t+1}_i, \bbp^{t+1}_i)}{u_i(\bbb^t_i, \bbp^t_i)} &=  e_i \log \frac{a_{ij(i, \bbp^{t+1})}}{p_{j(i, \bbp^{t+1})}^{t+1}} -  e_i \log \frac{a_{ij(i, \bbp^{t})}}{p_{j(i, \bbp^{t})}^{t}} \\
&=   e_i \log \frac{a_{ij(i,\bbp^{t+1})} p^{t+1}_{j(i, \bbp^t)}}{a_{ij(i, \bbp^t)} p^{t+1}_{j(i, \bbp^{t+1})}}  -  b_{ij(i, \bbp^t)}^t \log \frac{p^{t+1}_{j(i, \bbp^t)}}{p^{t}_{j(i, \bbp^t)}} \\
&= e_i \log \frac{a_{ij(i,\bbp^{t+1})} p^{t+1}_{j(i, \bbp^t)}}{a_{ij(i, \bbp^t)} p^{t+1}_{j(i, \bbp^{t+1})}} -  b_{ij(i, \bbp^t)}^t \Delta_{j(i, \bbp^t)}^t. \numberthis \label{eqn::linear::1}
\end{align*}
We also know that 
\begin{align*}
\frac{a_{ij(i, \bbp^t)}}{p^{t+1}_{j(i, \bbp^t)}}  = \frac{a_{ij(i, \bbp^t)}}{p^{t}_{j(i, \bbp^t)} e^{\Delta_{j(i, \bbp^t)}^t}}  \geq  \frac{a_{ij(i, \bbp^{t+1})}}{p^{t}_{j(i, \bbp^{t+1})} e^{\Delta_{j(i, \bbp^t)}^t}} = \frac{a_{ij(i, \bbp^{t+1})} e^{\Delta_{j(i, \bbp^{t+1})}^t}}{p^{t+1}_{j(i, \bbp^{t+1})} e^{\Delta_{j(i, \bbp^t)}^t}}.
\end{align*}
Therefore, 
\begin{align*}
 e_i \log \frac{a_{ij(i, \bbp^{t+1})} p^{t+1}_{j(i, \bbp^t)}}{a_{ij(i, \bbp^t)} p^{t+1}_{j(i, \bbp^{t+1})}} \leq  e_i (\Delta_{j(i, \bbp^t)}^t-  \Delta_{j(i, \bbp^{t+1})}^t) = \sum_j (b_{ij}^t - b_{ij}^{t+1})\Delta_{j}^t. \numberthis \label{eqn::linear::2}
\end{align*}
To see the final equality, note that $b_{ij(i,\bbp^t)}= e_i$ and $b_{ij}=0$ for 
all other $j$;
so $\sum_j b_{ij}^t \Delta_j^t = e_i \Delta^t_{j(i,\bbp^t)}$;
likewise, $\sum b_{ij}^{t+1}\Delta_j^t = e_i \Delta^t_{j(i,\bbp^{t+1})}$.

Combining \eqref{eqn::linear::1} and \eqref{eqn::linear::2} yields the result.
\end{proof}

The remaining lemmas use the following observations from  \cite{cheung2013tatonnement}.
\begin{align*}
\max_{\bbx_i \cdot \bbp = e_i} u_i(x_i) = \left\{ \begin{array}{lr} e_i \left(\sum_j a_{ij}^{1 - c_i} p_j^{c_i} \right)^{- \frac{1}{c_i}} ~~~~~&\mbox{$\rho_i < 1$} \\ e_i \frac{a_{ij}}{p_j} ~~~~~&\mbox{$\rho_i  = 1$}\end{array} \right. , \numberthis \label{eqn::max::utility}
\end{align*}
And the best response to price $\bbp$ is
\begin{align*}
b_{ij} = e_i \frac{a_{ij}^{1 - c_i} p_j^{c_i}}{\sum_{j'} a_{ij'}^{1 - c_i} p_{j'}^{c_i}}. \numberthis \label{eq::best::resp}
\end{align*}

\pfof{Lemma~\ref{lem::2}}
First, we decompose the LHS into two parts:
\begin{align*}
e_i \log \frac{u_i(\bbb_i^{t+1}, \bbp^{t+1})}{u_i(\bbb_i^t, \bbp^t)} = e_i \log \frac{u_i(\bbb_i^{t+1}, \bbp^{t+1})}{u_i(\bbb_i^t, \bbp^{t+1})}  + e_i \log \frac{u_i(\bbb_i^{t}, \bbp^{t+1})}{u_i(\bbb_i^t, \bbp^t)}. 
\end{align*}
We start by bounding the first term. Note that 
\begin{align*}
u_i(\bbb_i, \bbp) = \left(\sum_j a_{ij} \left(\frac{b_{ij}}{p_j}\right)^{\rho_i} \right)^{\frac{1}{\rho_i}}.
\end{align*}

Using \eqref{eq::best::resp} yields:
\begin{align*}
e_i \log \frac{u_i(\bbb_{i}^{t+1}, \bbp^{t+1})}{u_i(\bbb_{i}^{t}, \bbp^{t + 1})}   &= \frac{e_i}{\rho_i} \log \dfrac{ \sum_j a_{ij} \left(\frac{e_i \frac{a_{ij}^{1 - c_i} (p_j^{t+1})^{c_i}}{\sum_{j'} a_{ij'}^{1 - c_i} (p_{j'}^{t+1})^{c_i}}}{p_j^{t+1}}\right)^{\rho_i}}{\sum_j a_{ij} \left(\frac{e_i \frac{a_{ij}^{1 - c_i} (p_j^{t})^{c_i}}{\sum_{j'} a_{ij'}^{1 - c_i} (p_{j'}^{t})^{c_i}}}{p_j^{t + 1}}\right)^{\rho_i}}\\
& = \frac{e_i}{\rho_i} \log  \frac{\sum_j a_{ij} \left(\frac{a_{ij}^{1 - c_i} (p_j^{t + 1})^{c_i}}{p_j^{t+1}}\right)^{\rho_i}}{\sum_j a_{ij} \left(\frac{a_{ij}^{1 - c_i} (p_j^{t })^{c_i}}{p_j^{t+1}}\right)^{\rho_i}} + e_i \log \frac{\sum_{j'} a_{ij'}^{1 - c_i} (p_{j'}^{t})^{c_i}}{\sum_{j'} a_{ij'}^{1 - c_i} (p_{j'}^{t+1})^{c_i}}.
\end{align*}

By calculation, $1 + (1 - c_i) \rho_i = 1 - c_i$ and $(c_i - 1) \rho_i = c_i$. So,
\begin{align*}
\sum_j a_{ij} \left(\frac{a_{ij}^{1 - c_i} (p_j^{t + 1})^{c_i}}{p_j^{t+1}}\right)^{\rho_i} = \sum_j a_{ij}^{1 - c_i} \left(p_j^{t+1}\right)^{c_i} ~~~~~~\mbox{and} \\
\sum_j a_{ij} \left(\frac{a_{ij}^{1 - c_i} (p_j^{t })^{c_i}}{p_j^{t+1}}\right)^{\rho_i} = \sum_j a_{ij}^{1 - c_i} \left(p_j^{t}\right)^{c_i} \left( \frac{p_j^t}{p_j^{t+1}}\right)^{\rho_i}
\end{align*}

Therefore,
\begin{align*}
e_i \log \frac{u_i(\bbb_{i}^{t+1}, \bbp^{t+1})}{u_i(\bbb_{i}^{t}, \bbp^{t + 1})}  & = e_i \frac{1 - \rho_i}{\rho_i} \log \frac{ \sum_j a_{ij}^{1 - c_i} \left( p_j^{t+1} \right)^{c_i} }{ \sum_j a_{ij}^{1 - c_i} \left( p_j^t \right)^{c_i} \left( \frac{p_j^t}{p_j^{t+1}} \right)^{\rho_i} } + e_i \log \frac{\sum_j a_{ij}^{1 - c_i} \left( p_j^t \right)^{c_i}}{\sum_j a_{ij}^{1 - c_i} \left( p_j^t \right)^{c_i} \left(\frac{p_j^t}{p_j^{t+1}} \right)^{\rho_i}}.
\end{align*}
Note that, by \eqref{eq::best::resp}, $b_{ij}^{t+1} = e_i \frac{a_{ij}^{1 - c_i} \left(p^{t+1}_j\right)^{c_i}}{\sum_{j'} a_{ij'}^{1 - c_i} \left(p^{t+1}_{j'}\right)^{c_i}}$ and $b_{ij}^{t} = e_i \frac{a_{ij}^{1 - c_i} \left(p^{t}_j\right)^{c_i}}{\sum_{j'} a_{ij'}^{1 - c_i} \left(p^{t}_{j'}\right)^{c_i}}$.
Thus
\begin{align*}
&e_i \log \frac{u_i(\bbb_{i}^{t+1}, \bbp^{t+1})}{u_i(\bbb_{i}^{t}, \bbp^{t + 1})} \\
& \smallspace = - e_i \frac{1 - \rho_i}{\rho_i} \log \sum_j \frac{b_{ij}^{t+1}}{e_i} \left( \frac{p_j^t}{p_j^{t+1}} \right)^{\rho_i} \left( \frac{p_j^t}{p_j^{t+1}} \right)^{c_i} - e_i \log \sum_j \frac{b_{ij}^t}{e_i} \left(\frac{p_j^t}{p_j^{t+1}} \right)^{\rho_i}.
\end{align*}
As the $\log$ function is concave, $\log \sum_i a_i x_i \geq \sum_i a_i \log x_i$ when $\sum_i a_i = 1$; this yields:
\begin{align*}
e_i \frac{1 - \rho_i}{\rho_i} \log \sum_j \frac{b_{ij}^{t+1}}{e_i} \left( \frac{p_j^t}{p_j^{t+1}} \right)^{\rho_i} \left( \frac{p_j^t}{p_j^{t+1}} \right)^{c_i} &\geq \sum_j b_{ij}^{t+1} ( \rho_i + c_i) \frac{1 - \rho_i}{\rho_i} \log \frac{p_j^t}{p_j^{t+1}} \\
&= -\rho_i \sum_j b_{ij}^{t+1} \log \frac{p_j^t}{p_j^{t+1}}\\
\mbox{and}\largespace e_i \log \sum_j \frac{b_{ij}^t}{e_i} \left(\frac{p_j^t}{p_j^{t+1}} \right)^{\rho_i}   &\geq \rho_i \sum_j b_{ij}^t \log \frac{p_j^t}{p_j^{t+1}}.
\end{align*}
Therefore,
\begin{align*}\label{ineq::close::1}
e_i \log \frac{u_i(\bbb_{i}^{t+1}, \bbp^{t+1})}{u_i(\bbb_{i}^{t}, \bbp^{t + 1})} & \leq  \rho_i \sum_j b_{ij}^{t+1} \log \frac{p_j^t}{p_j^{t+1}} - \rho_i \sum_j b_{ij}^t \log \frac{p_j^t}{p_j^{t+1}} \\
& = - \rho_i  \sum_j b_{ij}^{t+1} \Delta_j^t + \rho_i \sum_j b_{ij}^t \Delta_j^t. \numberthis
\end{align*}

Now let's look at the second part, $e_i \log \frac{u_i(\bbb_i^{t}, \bbp^{t+1})}{u_i(\bbb_i^t, \bbp^t)}$.
\begin{align*}
e_i \log \frac{u_i(\bbb_{i}^{t}, \bbp^{t+1})}{u_i(\bbb_{i}^{t}, \bbp^{t})}   &= \frac{e_i}{\rho_i} \log \dfrac{ \sum_j a_{ij} \left(\frac{e_i \frac{a_{ij}^{1 - c_i} (p_j^{t})^{c_i}}{\sum_{j'} a_{ij'}^{1 - c_i} (p_{j'}^{t})^{c_i}}}{p_j^{t+1}}\right)^{\rho_i}}{\sum_j a_{ij} \left(\frac{e_i \frac{a_{ij}^{1 - c_i} (p_j^{t})^{c_i}}{\sum_{j'} a_{ij'}^{1 - c_i} (p_{j'}^{t})^{c_i}}}{p_j^{t}}\right)^{\rho_i}}\\
&= \frac{e_i}{\rho_i} \log \dfrac{ \sum_j a_{ij} \left(\frac{a_{ij}^{1 - c_i} (p_j^{t})^{c_i}}{p_j^{t+1}}\right)^{\rho_i}}{\sum_j a_{ij} \left(\frac{a_{ij}^{1 - c_i} (p_j^{t})^{c_i}}{p_j^{t}}\right)^{\rho_i}}.
\end{align*}

Recall that  $1 + (1 - c_i) \rho_i = 1 - c_i$ and $(c_i - 1) \rho_i = c_i$. So,
\begin{align*}
e_i \log \frac{u_i(\bbb_{i}^{t}, \bbp^{t+1})}{u_i(\bbb_{i}^{t}, \bbp^{t})} = \frac{e_i}{\rho_i} \log \frac{\sum_j a_{ij}^{1 - c_i} \left(p_j^t\right)^{c_i} \left( \frac{p_j^t}{p_j^{t+1}} \right)^{\rho_i}}{\sum_j a_{ij}^{1 - c_i} \left(p_j^t\right)^{c_i}}.
\end{align*}

Remember that $b_{ij}^t = e_i \frac{a_{ij}^{1 - c_i} \left( p_j^t \right)^{c_i}}{\sum_{j'} a_{ij'}^{1 - c_i} \left( p_{j'}^t \right)^{c_i}}$. Therefore,
\begin{align*}
e_i \log \frac{u_i(\bbb_{i}^{t}, \bbp^{t+1})}{u_i(\bbb_{i}^{t}, \bbp^{t})} = \frac{e_i}{\rho_i} \log \sum_j \frac{b_{ij}^t}{e_i}  \left( \frac{p_j^t}{p_j^{t+1}} \right)^{\rho_i} \leq \frac{e_i}{\rho_i} \sum_j \frac{b_{ij}^t}{e_i} \left( \left( \frac{p_j^t}{p_j^{t+1}} \right)^{\rho_i} - 1\right),
\end{align*}
using the fact that $\log x \leq x - 1$ for the last inequality, and noting that $\sum_j b_{ij}^t = e_i$.

As $\frac{p_j^{t+1}}{p_j^t} = e^{\Delta_j^t}$ and $|\Delta_j^t | \leq 1$, 
\begin{align*}
\frac{ \left( \frac{p_j^t}{p_j^{t+1}} \right)^{\rho_i} - 1}{\rho_i} \leq - \Delta_j^t + \rho_i \left(\Delta_j^t\right)^2.
\end{align*}

Therefore,
\begin{align*} \label{ineq::close::2}
e_i \log \frac{u_i(\bbb_{i}^{t}, \bbp^{t+1})}{u_i(\bbb_{i}^{t}, \bbp^{t})} \leq - \sum_j b_{ij}^t \Delta_j^t + \sum_j b_{ij}^t \rho_i \left(\Delta_j^t\right)^2. \numberthis
\end{align*}

Combining \eqref{ineq::close::1} and \eqref{ineq::close::2} gives the result.
\end{proof}

The following claim is used in the final two results.
\begin{claim}
\label{clm::util-ratio-bdd}
If $\rho <1$,
\begin{align*}
 e_i \log \frac{u_i(\bbb_i^{t+1}, \bbp^{t+1})}{u_i(\bbb_i^{t}, \bbp^{t})} = 
 -\frac{e_i}{c_i} \log \sum_j \frac{b_{ij}^t}{e_i} \left( \frac{p_j^{t+1}}{p_j^t}\right)^{c_i}.
\end{align*}
\end{claim}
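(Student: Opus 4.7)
The plan is to derive the identity by direct substitution of the closed-form expression~\eqref{eqn::max::utility} for the maximum attainable utility at each time step, and then to rewrite the ratio of the two sums that appear inside the resulting logarithm using the best-response formula~\eqref{eq::best::resp}. Since $\rho_i<1$, both $u_i(\bbb_i^t,\bbp^t)$ and $u_i(\bbb_i^{t+1},\bbp^{t+1})$ are best-response utilities, so \eqref{eqn::max::utility} applies with the $\rho_i<1$ branch.

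First I would apply \eqref{eqn::max::utility} at times $t$ and $t+1$, obtaining
\begin{align*}
u_i(\bbb_i^{t},\bbp^{t}) \;=\; e_i\Bigl(\textstyle\sum_j a_{ij}^{1-c_i}(p_j^{t})^{c_i}\Bigr)^{-1/c_i},
\qquad
u_i(\bbb_i^{t+1},\bbp^{t+1}) \;=\; e_i\Bigl(\textstyle\sum_j a_{ij}^{1-c_i}(p_j^{t+1})^{c_i}\Bigr)^{-1/c_i}.
\end{align*}
Taking the logarithm of their ratio and multiplying by $e_i$ gives
\begin{align*}
e_i \log \frac{u_i(\bbb_i^{t+1},\bbp^{t+1})}{u_i(\bbb_i^{t},\bbp^{t})}
\;=\; -\frac{e_i}{c_i}\, \log \frac{\sum_j a_{ij}^{1-c_i}(p_j^{t+1})^{c_i}}{\sum_j a_{ij}^{1-c_i}(p_j^{t})^{c_i}}.
\end{align*}

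Next I would factor $(p_j^{t})^{c_i}$ out of each term in the numerator, writing $(p_j^{t+1})^{c_i}=(p_j^{t})^{c_i}(p_j^{t+1}/p_j^{t})^{c_i}$, and then bring the denominator inside the logarithm to produce a convex-combination-style expression with weights $a_{ij}^{1-c_i}(p_j^{t})^{c_i} / \sum_{j'} a_{ij'}^{1-c_i}(p_{j'}^{t})^{c_i}$. By \eqref{eq::best::resp}, these weights are exactly $b_{ij}^t / e_i$, which yields
\begin{align*}
e_i \log \frac{u_i(\bbb_i^{t+1},\bbp^{t+1})}{u_i(\bbb_i^{t},\bbp^{t})}
\;=\; -\frac{e_i}{c_i}\, \log \sum_j \frac{b_{ij}^{t}}{e_i}\left(\frac{p_j^{t+1}}{p_j^{t}}\right)^{c_i},
\end{align*}
as claimed.

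This is a purely algebraic manipulation, so there is no substantive obstacle. The only points requiring care are keeping the exponent $-1/c_i$ (rather than $1/c_i$) and ensuring the weights after factoring match \eqref{eq::best::resp} precisely, including the normalizing denominator $\sum_{j'} a_{ij'}^{1-c_i}(p_{j'}^{t})^{c_i}$ that converts $a_{ij}^{1-c_i}(p_j^{t})^{c_i}$ into $b_{ij}^t/e_i$.
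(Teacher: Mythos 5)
Your proposal is correct and matches the paper's proof step for step: both apply the $\rho_i<1$ branch of \eqref{eqn::max::utility} to write the two utilities in closed form, take the log of the ratio, and then substitute \eqref{eq::best::resp} to recognize the weights $b_{ij}^t/e_i$. The only difference is that you spell out the factoring of $(p_j^t)^{c_i}$ explicitly where the paper folds it into a single substitution step.
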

\begin{proof}
As $\rho_i < 1$, by \eqref{eqn::max::utility},  $\max_{x_i \cdot p = e_i} u_i(x_i) = e_i \left(\sum_j a_{ij}^{1 - c_i} p_j^{c_i} \right)^{- \frac{1}{c_i}}$. Thus,
\begin{align*}
 e_i \log \frac{u_i(\bbb_i^{t+1}, \bbp^{t+1})}{u_i(\bbb_i^{t}, \bbp^{t})} = -\frac{e_i}{c_i} \log \frac{\sum_j a_{ij}^{1 - c_i} \left(p_j^{t+1}\right)^{c_i}}{\sum_j a_{ij}^{1 - c_i} \left(p_j^{t} \right)^{c_i}}.
\end{align*}

Substituting from \eqref{eq::best::resp} gives
\begin{align*}
e_i \log \frac{u_i(\bbb_i^{t+1}, \bbp^{t+1})}{u_i(\bbb_i^{t}, \bbp^{t})} = -\frac{e_i}{c_i} \log \frac{\sum_j a_{ij}^{1 - c_i} \left(p_j^{t+1}\right)^{c_i}}{\sum_j a_{ij}^{1 - c_i} \left(p_j^{t} \right)^{c_i}} = -\frac{e_i}{c_i} \log \sum_j \frac{b_{ij}^t}{e_i} \left( \frac{p_j^{t+1}}{p_j^t}\right)^{c_i}.
\end{align*}
\end{proof}

\pfof{Lemma~\ref{lem::1}}
By applying Claim~\ref{clm::util-ratio-bdd}, and noting
 that $p_j^{t+1} = p_j^t e^{\Delta_j^t}$ and $| \Delta_j^t | \leq \lambda$ and  $c_i < 0$, yields
\begin{align*}
e_i \log \frac{u_i(\bbb_i^{t+1}, \bbp^{t+1})}{u_i(\bbb_i^{t}, \bbp^{t})} &= -\frac{e_i}{c_i} \log \sum_j \frac{b_{ij}^t}{e_i} \left( \frac{p_j^{t+1}}{p_j^t}\right)^{c_i} \\
&\leq -\frac{e_i}{c_i} \sum_j \frac{b_{ij}^t}{e_i} \left[\left( \frac{p_j^{t+1}}{p_j^t}\right)^{c_i} - 1 \right]\\
&\largespace \mbox{(using $\log x \leq x - 1$)} \\
&= - \frac{e_i}{c_i} \sum_j \frac{b_{ij}^t}{e_i} \left(e^{c_i \Delta_j^t} - 1 \right) \\
&\leq - \sum_j b_{ij}^t \frac{c_i \Delta_j^t + \left( c_i \Delta_j^t\right)^2}{c_i} \\
&\largespace \mbox{(using $e^x \leq 1 + x + x^2$ if $-1 \leq x = c_i \Delta_j^t \leq 1$)} \\
&= - \sum_j b_{ij}^t \Delta_j^t - \sum_j b_{ij}^t c_i \left(\Delta_j^t\right)^2.
\end{align*}
\end{proof}

\pfof{Lemma~\ref{lem::comp}}
As $\rho \le 0$, by Claim~\ref{clm::util-ratio-bdd},
\begin{align*}
e_i \log \frac{u_i(\bbb^{t+1}_i, \bbp^{t+1}_i)}{u_i(\bbb^t_i, \bbp^t_i)} = -\frac{e_i}{c_i} \log \sum_j \frac{b_{ij}^t}{e_i} \left(\frac{p_j^{t+1}}{p_j^t} \right)^{c_i} 
\leq - \frac{e_i}{c_i} \sum_j \frac{b_{ij}^t}{e_i} \log \left(\frac{p_j^{t+1}}{p_j^t} \right)^{c_i} 
\leq - \sum_j b_{ij}^t \Delta_j^t.
\end{align*}
The first inequality holds as $\log$ is a concave function, $c_i > 0$ for complementary buyers, and $\sum_j \frac{b_{ij}^t}{e_i} =1$;
the final inequality uses $\Delta_j^t = \log \frac{p_j^{t+1}}{p_j^t}$.
\end{proof}

\pfof{Claim~\ref{lem::lower::progress}}
If $z_j = -1$ then $\sum_j b_{ij}^t = 0$ and the claim holds.
Otherwise, $z_j > -1$ and
\begin{align*}
p_j^t z_j^t \Delta_j^t &= \left(\sum_i b_{ij}^t - p_j \right) \Delta_j^t = \left(\sum_j b_{ij}^t\right)  \left( 1 - \frac{1}{1 + z_j^t}\right) \Delta_j^t \\
&\geq \left(\sum_i b_{ij}^t\right)  \left( 1 - \frac{1}{1 + \frac{\Delta_j^t}{\lambda}}\right) \Delta_j^t.
\end{align*}
If $z_j^t \geq 0$, then $0 \leq \frac{\Delta_j^t}{\lambda} \leq 1$. This implies
\begin{align*}
1 - \frac{1}{1 + \frac{\Delta_j^t}{\lambda}} \geq \frac{\Delta_j^t}{2 \lambda};
\end{align*}
and if $z_j^t < 0$, then $-\lambda < \Delta_j^t \leq 0$. This implies
\begin{align*}
1 - \frac{1}{1 + \frac{\Delta_j^t}{\lambda}} \leq \frac{\Delta_j^t}{\lambda} < \frac{\Delta_j^t}{2\lambda}.
\end{align*}
The result now follows.
\end{proof}
\end{document}